\definecolor{darkgreen}{RGB}{50,190,50}
\definecolor{darkblue}{RGB}{0,0,190}
\definecolor{darkred}{RGB}{238,0,0}
\newcommand{\ket}[1]{\left\vert#1\right\rangle}
\newcommand{\bra}[1]{\left\langle #1 \right\vert}
\newcommand{\ketbra}[2]{| #1\rangle \langle #2|}
\newcommand{\Tr}{\text{Tr}}
\newcommand{\de}[1]{\left(#1\right)}
\newcommand{\DE}[1]{\left\{{#1}\right\}}
\newtheorem{theorem}{Theorem}
\renewcommand{\emph}[1]{\textit{#1}}
\newcommand{\comment}[1]{}
\newcommand{\cv}[1]{\textcolor{black}{#1}}
\begin{document}

\preprint{APS/123-QED}

\title{Quantum key distribution overcoming extreme noise:\\simultaneous subspace coding using high-dimensional entanglement}

\author{Mirdit Doda}
\affiliation{Institute of Physics, Slovak Academy of Sciences, 845 11 Bratislava, Slovakia}%
\affiliation{Institute for Quantum Optics and Quantum Information - IQOQI Vienna, Austrian Academy of Sciences, Boltzmanngasse 3, 1090 Vienna, Austria}%
\author{Marcus Huber}
\affiliation{Institute for Quantum Optics and Quantum Information - IQOQI Vienna, Austrian Academy of Sciences, Boltzmanngasse 3, 1090 Vienna, Austria}%
  \affiliation{Institute for Atomic and Subatomic Physics, Vienna University of Technology, Vienna, Austria}
  \affiliation{Vienna Center for Quantum Science and Technology, Atominstitut, TU Wien,  1020 Vienna, Austria}%
\author{Gláucia Murta}
\affiliation{Institut für Theoretische Physik III, Heinrich-Heine-Universität Düsseldorf,Universitätsstraße 1, D-40225 Düsseldorf, Germany}
\author{Matej Pivoluska}
\affiliation{Institute of Physics, Slovak Academy of Sciences, 845 11 Bratislava, Slovakia}%
\affiliation{Institute of Computer Science, Masaryk University, 602 00 Brno, Czech Republic}%
\author{Martin Plesch}
\affiliation{Institute of Physics, Slovak Academy of Sciences, 845 11 Bratislava, Slovakia}%
\affiliation{Institute of Computer Science, Masaryk University, 602 00 Brno, Czech Republic}%
\author{Chrysoula Vlachou}
\affiliation{Instituto de Telecomunicações, Av. Rovisco Pais 1, 1049-001 Lisboa, Portugal}
\affiliation{Departmento de Matemática, Instituto Superior Técnico,
Universidade de Lisboa, Av. Rovisco Pais 1, 1049-001 Lisboa, Portugal}

\date{\today}

\begin{abstract}
High-dimensional entanglement promises to increase the information capacity of photons and is now routinely generated exploiting spatio-temporal degrees of freedom of single photons. A curious feature of these systems is the possibility to certify entanglement despite strong noise in the data. We show that it is also possible to exploit this noisy {high-dimensional} entanglement {for quantum key distribution} by introducing a  protocol that uses mutliple subspaces of the high-dimensional system simultaneously. Our protocol can be used to {establish a secret key}  even in {{extremely noisy} experimental} conditions, where {qubit protocols fail}. {To show that, we analyze the performance of our protocol} for noise models that apply to the two most commonly used sources of high-dimensional entanglement: time bins and spatial modes.
\end{abstract}

\maketitle
\section{Introduction}
Quantum communication is one of the most mature areas of quantum technologies, with Quantum Key Distribution (QKD) \cite{BENNETT20147,E91, RevModPhys.81.1301,Lo2014,2019arXiv190601645P} as its most prominent example.
QKD is a cryptographic primitive allowing two parties, Alice and Bob, to securely establish a shared secret key in the presence of an adversary, Eve.

One of the primary scientific challenges in the transition to commercial QKD applications is the relatively low key rate and the high susceptibility to noise. Contrary to classical communication with light, the information encoded in quantum states can neither be copied nor amplified,  meaning that only few photons survive long distance transmissions.
Furthermore, single photons are challenging to detect and hard to isolate, leading to a lot of noise in the data.

{The problem of successfully performing a QKD protocol over noisy channels is {also} connected to {the} basic task of distributing entanglement. Essentially, if the channel is too noisy to distribute entanglement, it is also too noisy for QKD \cite{RevModPhys.81.1301,2019arXiv190601645P}.}
{This correspondence points to a possible solution to the above challenges --- the use of QKD protocols which utilize high-dimensional (HD) entanglement.
HD entanglement is known to feature  high resistance to noise according to theoretical noise models (e.g. white noise).
At the same time, HD states can encode more bits per photon.} {Another reason for considering entanglement-based QKD protocols is that they are less prone to practical attacks \cite{Pracattack} compared to their prepare-and-measure counterparts, thus also providing higher practical security. } {\cv {Indeed, using entanglement based systems one can relinquish trust in the source and place it in the hands of an untrusted node. Furthermore, entanglement distribution is a necessary step towards (partially) device independent implementations. }}{{Currently}, HD QKD protocols are becoming practical, because HD  quantum states of entangled photons can be routinely produced in the lab} using temporal \cite{PhysRevLett.64.2495,PhysRevLett.88.013601,PhysRevA.57.3123,Molina-Terriza2007,PhysRevLett.94.100501,WALBORN201087,Lima:11,Karimi2014,Rubinsztein_Dunlop_2016,Llewellyn_2019,Bavaresco_2018,Schaeff_2015,Schneeloch_2019,2020arXiv200404994H,Gmez2020MultidimensionalEG,2020arXiv200409964H}, frequency  encoding \cite{PhysRevA.88.032322,PhysRevLett.103.253601} or multiple ones simultaneously in a so-called hyperentangled state \cite{doi:10.1080/09500349708231877,PhysRevLett.95.260501,Vergyris_2019,Imany2019,PhysRevA.101.032302}.

Unfortunately, bringing this idea to practice is not straightforward, as the physical nature of the carriers and the actual noise become very important. HD quantum systems are not as easily controlled and measured as polarisation qubits, and all implementations come with their own limitations and additional sources of noise. While 
used for proof-of-principle experiments in QKD before \cite{AGS03,PhysRevA.82.030301,PhysRevA.88.032309,Brougham_2013,Mirhosseini_2015,Ding2017,PhysRevA.87.062322,Lee2015,PhysRevA.91.022336,2016arXiv161101139L,nik:alb:05,nik:ran:alb:06,vla:kra:mat:pau:sou:18,cha:15,Chau2,Chau3,bec:tit:00,cer:bou:kar:gis:02,Gr_blacher_2006,Sit:17,Fickler_2020,2020arXiv200403498V,Etcheverry2013,PhysRevA.96.022317}, they have never been competitive to regular qubit encoding, {and the majority of practical QKD implementations still uses binary encoding of quantum states in photons, such as polarisation \cite{xu2019secure} or time-bin qubits \cite{timebinqkd}.}  {In fact,} perhaps surprisingly, {even} the theoretically predicted higher noise resistance of HD entanglement has only recently been demonstrated in {\emph{realistic scenarios}} \cite{Ecker_2019}. There, the data obtained from measuring HD states distributed over realistic very noisy channels can be used to certify the presence of entanglement.
{However, such noisy data} are not necessarily useful for QKD.
This is because correcting the errors {on the outcomes obtained using multiple-outcome measurements is more demanding than correcting binary outcomes and it conventionally comes at a cost that can obliterate the advantage of using HD entanglement.}
{Hence, the question whether such noisy HD entanglement can actually be useful remains open.}
In other words, `\emph{can we still harness the HD nature of entanglement in situations where noise dominates the signal and qubit-based QKD would be impossible?'}
In this work we affirmatively answer that question and provide {an entanglement-based HD QKD protocol} with \emph{simultaneous subspace coding}. We provide detailed noise models for two paradigmatic implementations of HD entanglement to showcase the suitability of our protocol for practical advantages in QKD.

\section{The protocol}
\begin{algorithm}[H]
	\floatname{algorithm}{Protocol}
	\caption{Subspace QKD}\label{prot}
	\begin{algorithmic}[1]
\State\label{step:distribute}\noindent\textbf{Distribution.} A source distributes a state $\rho_{AB}$ to Alice and Bob.
\State\label{step:M}\noindent\textbf{Measurement.} Upon receiving the state, Alice and Bob choose  independently at random bits $w_A$ and $w_B$, respectively, such that $p(w_A=1)=p(w_B=1)=\varepsilon \ll 1$. If $w_A=0$, Alice performs a measurement in the $\{A_1^x\}_{x=0}^{d-1}$ basis, otherwise she measures in $\{A_2^x\}_{x=0}^{d-1}$. Similarly, Bob measures his part of the state in $\{B_1^y\}_{y=0}^{d-1}$ or $\{B_2^y\}_{y=0}^{d-1}$, accordingly. They record the outcomes $x$ and $y$ in the register $X$ and $Y$ respectively. Steps $1$ and $2$ are repeated $N$ times.
\State\label{step:AD}\noindent\textbf{Sifting and Subspace selection.} Through a classical public authenticated channel, Alice and Bob reveal for each iteration their basis choice and the values $m_A$ and $m_B$ of the subspaces their outcomes belong to.
\begin{itemize}
    \item If $m_A=m_B=m$ and $w_A=w_B$,   Alice and Bob set $M=m$, $x'=x-mk$ and $y'=y-mk$.
    \item If $m_A\neq m_B$ or $w_A\neq w_B$, they set $M=\perp$, $x'=y'=\perp$, and discard the round
\end{itemize}
\State\label{step:PE}\noindent\textbf{Parameter estimation.}
Alice and Bob use the second measurement basis outcomes (\emph{test rounds}) and some of the key first measurement basis outcomes (\emph{generation rounds}) to estimate correlations for each block $M=m$. The remaining measurement results form the raw key.
\State\label{step:IR-PA}\noindent\textbf{Information reconciliation and Privacy amplification.} Alice and Bob proceed with information reconciliation and privacy amplification in each subspace and extract the final key.
	\end{algorithmic}
\end{algorithm}

The general idea is to use a $d\times d$ dimensional entangled quantum system to perform multiple instances of a QKD protocol simultaneously in non-overlaping subspaces.
The choice of subspaces is arbitrary, but to simplify the notation we formulate the protocol using subpaces of equal size $k$. 
The protocol requires two measurement settings -- the computational basis measurement and a measurement in a basis mutually unbiased with respect to the computational in each subspace of size $k$.
Therefore, let $\{A_1^x\}_{x=0}^{d-1}$ and $\{B_1^y\}_{y=0}^{d-1}$ denote the {projectors on the} computational basis of $\mathcal{H}_A$ and $\mathcal{H}_B$, respectively, and
$\{A_2^x\}_{x=0}^{d-1}, \{B_2^y\}_{y=0}^{d-1}$  {a tensor product of} projectors on the mutually unbiased basis vectors in subspaces of size $k$ {(see Appendix \ref{app:RateProof} for details)}. 
Alice's measurement outcome $x = mk+i$ is interpreted as outcome $i$ in the $m$-th subspace. Bob's measurement outcome $y$ is interpreted analogously.


%
%
%
%
%
The protocol consists of the steps described in Protocol \ref{prot}.
To assess the {protocol's efficiency}, we calculate the achievable key rate $K$ in the asymptotic limit.
Initially, we assume that Eve is restricted to \emph{collective attacks}, i.e., at each iteration she attacks identically and independently of the previous, and she can perform a measurement on her ancillary system at any future time.
This assumption can be dropped later with a de Finetti-type argument \cite{deFinetti1,deFinetti2}  and thus security against any \emph{general coherent attack} can be obtained. {\cv{Note that since we derive our results in the asymptotic setting, we can safely use the de Finetti theorem without degrading the key-rate expression.}}
Furthermore, we assume that Eve prepares the entangled states that Alice and Bob share.
Proving security under this assumption implies security for any implementation of the protocol. {\cv {Besides these security assumptions, we also assume a detector model with fair sampling, access of the legitimate parties to randomness and to a classical public authenticated channel, as usual in QKD protocols.}}
Under these assumptions the asymptotic key rate is given by \cite{2005RSPSA.461..207D,RennerThesis}:
       $K\geq H(X|E_T)-H(X|Y),$
where $H(X|E_T)$ is the von Neumann entropy of Alice's key round outcome $X$ conditioned on the total information available to the eavesdropper Eve at the end of Step \ref{step:PE}, given that Eve holds a purification of the  state $\rho_{AB}$, and $H(X|Y)$ is the conditional Shannon entropy between Alice's and Bob's key-round outcomes.

The asymptotic key rate of Protocol \ref{prot} is given by
 \begin{align}\label{eq:rateSubQKD}
K_{TOT} \geq \sum_{m=0}^{\ell-1} P(M=m)K_m,
\end{align}
where $P(M=m)$ is the probability that both Alice and Bob obtain an outcome in subspace $m$ and $K_m$ is the corresponding rate, given by $K_m=H(X'|E_T)_{\tilde{\rho}^m}-H(X'|Y')_{\tilde{\rho}^m}$, where $\tilde{\rho}^m$ is the state effectively shared by the parties in the subspace $m$.
The proof of \eqref{eq:rateSubQKD} is based on similar techniques as the ones used in protocols with advantage distillation \cite{Mau93,GL03,KBR07,BA07}. For the definition of the state $\tilde{\rho}^m$ and the detailed proof of this result, see Appendix \ref{app:RateProof}.

To compute $K_m$ for the subspace $m$ we lower bound the conditional entropy $H(X'|E_T)_{\tilde{\rho}^m} $ by the conditional min-entropy:
 $H(X'|E_T)_{\tilde{\rho}^m}\geq H_{min}(X'|E_T)_{\tilde{\rho}^m}$, where $H_{min}(X'|E_T)_{\tilde{\rho}^m}=-\log P_g^m$ and $P_g^m$ is the average probability that Eve can guess Alice's outcome computed using the effective state $\tilde{\rho}^m$.
 To determine $P_g^m$, for a subspace $m$ from measurement results, we use the correlations of Alice's and Bob's outcomes in the second basis, expressed as \begin{equation}
     W_k^m=\sum_{i=0}^{k-1} P(ii|22,m),
 \end{equation}
where $P(ii|22,m)=\frac{P(x=mk+i,y=mk+i|22)}{P(M=m)}$ is the probability that Alice and Bob obtain equal outcomes when they measure in $\{A_2^x\}_{x=0}^{d-1}$ and $\{B_2^y\}_{y=0}^{d-1}$ and obtain outcomes in the subspace $m$.
 In Appendix \ref{app:TheWitness} we present and solve the optimization problem that enables us to show that Eve's guessing probability for the subspace $m$ can be  expressed as
 a function of the subspace dimension $k$ and the correlation $W_k^m$ as:
\begin{equation}\label{eq:Pg}
   H_{min}(X'|E_T)_{\tilde{\rho}^m}\!=\!-\!\log_2\!\! \left(\!\tfrac{\left(\sqrt{W_k^m}+\sqrt{(k-1)(1-W_k^m)}\right)^2}{k}\!\right)\!\!.
\end{equation}{}

The conditional entropy $H(X'|Y')_{\tilde{\rho}^m}$ for subspace $m$ can be estimated {directly from the measurement outcomes} in the basis $\{A_1^x\}_{x=0}^{d-1}$ and $\{B_1^y\}_{y=0}^{d-1}$.

 Note that a similar technique of subspace encoding has been used previously to encode qubits using HD systems \cite{cha:15,Chau2}. However, our protocol is intrinsically different. Our construction allows to explore subspaces of arbitrary sizes and the encoding is simultaneous in all the subspaces. Moreover, our protocol, as well as the noise analyses presented in the following Sections, are designed for an entanglement-based implementation, which offers an higher level of  practical security compared to prepare and measure protocols, whose noise robustness and feasibility were considered in \cite{Chau3}.

\emph{Isotropic state example}: First we investigate a paradigmatic noise model, i.e. a {maximally entangled} state mixed with white noise, $\rho_d(v) = v\ketbra{\psi_d^+}{\psi_d^+} + \frac{(1-v)}{d^2}\mathbb{1}_{d}\otimes\mathbb{1}_{d}$ with visibility $v$.
{We} calculate the asymptotic key rate for $k=d$, i.e. Alice and Bob use standard QKD and derive the key from the {full} Hilbert space.
{Measuring $\rho_d(v)$ {in} the second {basis} for both Alice and Bob leads to}
$W_d=v+\frac{1-v}{d}$ and thus via equation \eqref{eq:Pg}
$H(X|E_T) \geq -\log_2 \left(\frac{ \left( \sqrt{vd+1-v}+(d-1)\sqrt{1-v} \right)^2 }{d^2}\right)$.
To determine $H(X|Y)$, we observe that,
given $\rho_d(v)$, the probability distribution for Alice obtaining result $x$ and Bob obtaining result $y$ in the key rounds is given by $P_{key}(xy)=
v\frac{\delta_{xy}}{d}+\frac{1-v}{d^2},$ and the respective conditional probability distribution is $P_{key}(x|y)=v\delta_{xy}+\frac{1-v}{d}$.

Similar calculations can be done
in case Alice and Bob perform Protocol \ref{prot} with subspaces   $\mathcal{H}_{A_m}\otimes\mathcal{H}_{B_m}$ of size $k\times k$.
In such a case they effectively measure the state $\rho_k^m(\tilde{v})$ in each subspace $m$, which can obtained by
projecting $\rho_d(v)$ onto this subspace.
Because of the symmetry of $\rho_d(v)$, the state $\rho_k^m(\tilde{v})$ is independent of $m$ and its density matrix is equivalent to $\rho_k(\tilde{v})= \tilde{v} \ketbra{\psi^+_k}{\psi^+_k} + \frac{(1-\tilde{v})}{k^2}\mathbb{1}_{k}\otimes\mathbb{1}_{k} $,
where $\ket{\psi^+_k}=\frac{1}{\sqrt{k}}\sum_{i=0}^{k-1}\ket{ii}$ and
$\tilde{v} = \tilde{v}(d,v,k):=vd/(vd+k-vk)$.

For each subspace  $H(X'|E_T)_{\tilde{\rho}^m}$ and $H(X'|Y')_{\tilde{\rho}^m}$ we can now set $\tilde{\rho}^m=\rho_k(\tilde{v})$.  Measurements of this state {in the second basis} lead to $W_k^m=\frac{vd+1-v}{vd+k-vk}$, which (using Eq. \eqref{eq:Pg}) results in:
\begin{equation}
H(X'|E_T)_{\tilde{\rho}^m}\geq
-\log_2\left(
\tfrac{ \left( \sqrt{vd+1-v}+(k-1)\sqrt{1-v} \right)^2 }{k(vd+k-vk)}\right).\label{eq:pgsub}\end{equation}
Evaluating $H(X'|Y')_{\rho_k(\tilde{v})}$ and summing over all subspaces leads to:
\begin{align}
     K&_{\text{TOT}}^{iso}(d,v,k)
\!\geq\!
\left(\tfrac{vd+k-vk}{d}\right)
\log_2\!\left(\!\!\tfrac{k}
    {\left(\sqrt{vd+1-v}+(k-1)\sqrt{1-v}\right)^2}
    \!\!\right)
\nonumber\\
&+\!(\tfrac{vd+1-v}{d})\!\log_2(vd\!+\!1\!-\!v) \!+\!\tfrac{(k-1)(1-v)}{d}
\!\log_2(1\!-\!v)
.
 \label{keyiso}
 \end{align}
For each $d$ and $v$, which are known experimental parameters, one can optimize $K_{\text{TOT}}^{iso}(d,v,k)$ over the subspace size $k$ to determine the protocol implementation {with} the optimal key rate.
Another interesting quantity is the critical visibility, i.e. the visibility beyond which one cannot obtain a positive key rate anymore. This is generally a complicated function of  $k$ and $d$. However, considering even $d$ and $k=2$, it can be shown that the key rate is positive for $v>\tfrac{1}{1+0.0893d}$.
{For} constant $v$, one can always obtain a positive key rate by increasing the global dimension $d$. This is in accordance with the {previously observed fact} -- the robustness of entanglement in the isotropic state increases with the dimension.
In practice, however, the $v$ is not a constant, but rather a function of $d$, {and strongly depends} on the particular implementation.
To {infer} whether our protocol actually holds the potential to outperform qubit-based  protocols, it is thus essential  {to} take the experimental {specifications} into account.
In what follows we study two different state-of-the-art implementations of our protocol with dimension-dependent noise models.
The first employs temporal and the second spatial degrees of freedom {of photons} for the generation of HD entanglement. The motivation for using these particular setups is that  they were recently shown to provide an advantage for entanglement certification \cite{Ecker_2019}.

\section{Realistic noise models}
\subsection{Implementation using temporal degrees of freedom}
To start with the temporal implementation we consider a hyper-entangled state of the form
$\ket{\Psi}= \ket{\phi^-}_{AB} \otimes\int dt f(t) \ket{t}_A \otimes \ket{t}_B$.
The  part of the state entangled in energy-time is produced by a laser source via spontaneous parametric down-conversion, and the interference of photons in the temporal domain is subsequently enabled by introducing entanglement in polarization by means of $\ket{\phi^-}_{AB}$.
Alice and Bob measure the time of arrival, $t$, of the photons, i.e., the time when the detectors click. The maximum resolution with which they can detect photons arriving at the same time is given by the duration of a time bin, $t_b$, and with respect to it, they determine a time frame, $F$, outside of which any photon arriving is considered ``lost''.
They choose $F=d t_b\text{ for }d\in \mathbb{N}$, effectively discretizing the energy-time space to obtain a space of dimension $d$; the encoding space.
The frames in which they both had one click are post-selected and used for the key rate.
In Appendix \ref{app:Temporal} we present in detail a noise model for this setup, in which we take into account noise effects due to the interaction of the photons with the environment and due to the imperfect detectors.
In particular, photons might be lost before arriving to the lab, and other photons coming from the environment might enter and make the detectors click.
The environmental photons are the main source of noise for this implementation.
Moreover, {we consider} dark counts, i.e., detector clicks in the absence of a photon, and finally, {that} the detectors might not click in the presence of a photon.
We can, then, express the key rate as a function of the dimension $d$  and the visibility $v$, which is the probability that, given that both Alice and Bob had one click, this click is due to a photon coming from the laser source and not due to an environmental photon or a dark count. For this model the visibility is given as
\begin{equation}
    v(d)=1/(1+dt_bT_AT_B \gamma^{-1}),
\end{equation}
and the production rate of post-selected frames$/sec$  is
\begin{align}
R(d,v)=e^{-d\, t_b (T_A+T_B+\gamma)}(d\,t_bT_A T_B + \gamma),
\end{align}
 where $T_{A/B},\gamma$ are experimental parameters incorporating all quantities that are independent of $d$. $T_{A/B}$ is the average number of uncorrelated clicks per second, coming from dark counts, environmental photons or laser photons when one of the parties is affected by losses or detectors' inefficiencies; $\gamma$ is the average number of detected entangled photons per second, i.e, the photons  coming from the laser source that were not lost and produced a click. The achievable key rate expressed in $bits/sec$ is  $K(d)=R(d,v)K^{iso}_{\text{TOT}}(d,v,k)$. {\cv {In practical implementations this number is further multiplied by $(1-\varepsilon)^2$, i.e. the probability that both Alice and Bob used the first measurement. Since here we are dealing with asymptotic key rate, we can choose $\varepsilon$ arbitrarily close to $0$, hence we disregard it.}}
 In Figure \ref{fig:temporal}, we plot $K(d)$ versus the noise-to-signal ratio for different subspace-dimension choices $k$ in various total dimensions $d$.

 \begin{figure}[h]
     \centering
     \includegraphics[width= 0.47\textwidth]{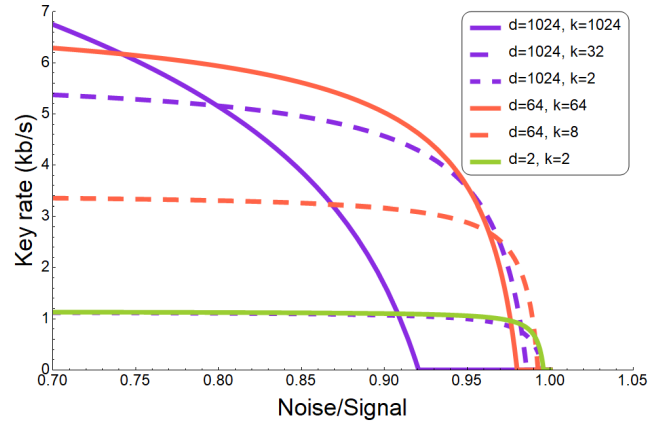}
     \caption{
    Achievable key rate versus noise-to-signal ratio for the temporal domain.
    The noise-to-signal ratio is the average number of non-entangled photons that arrive in the lab (including singles and taking into account detector inefficiencies) divided by the overall average number of clicks per second, assuming that these quantities are the same for both parties. For this implementation this quantity is dimension-independent.  Since we vary the frame sizes, but not the bin sizes, the end points are independent of the total dimension and the optimal noise resistance is always achieved in two-dimensional subspaces, \cv{which includes the traditional qubit encoding ($d = 2, k = 2$)}. The plot illustrates the fact that choosing $k>2$, however, can significantly increase the total key rate at lower noise levels. \cv{Another fact one can directly observe is that the best key rate for given noise level can be obtained by fine-tuning values of $d$ and $k$.}}
     \label{fig:temporal}
 \end{figure}

\subsection{Implementation using spatial degrees of freedom}
We move to photons entangled in spatial degrees of freedom.
Due to spatial symmetry, the state produced by the laser source {is} of the form $\ket{\Psi}=\sum_{l=-\infty}^{\infty}c_l\ket{l}_A\ket{-l}_B$, where $l$ denotes momentum modes and $c_l$ depends on the source specifications. This state is subsequently projected in a space spanned by a finite subset of modes, $l$, with cardinality $d$; our encoding space, hence arises from an effective discretization with respect to the finite resolution of the detectors. In our {noise} model, we consider noise effects originating from losses, environmental photons, detector inefficiencies and dark counts.
For the key rate, Alice and Bob post-select the rounds in which they both obtained one click, and just like in the previous implementation, the visibility $v$ includes the rounds where the clicks came from a source photon pair. In this implementation each party needs a detector \emph{for each mode}, resulting in dark counts contributing the most to noise through more frequent accidental coincidences. We calculate the visibility to be
\begin{equation}
v(d)=\frac{e^{ \tfrac{\gamma}{d}} -1}{e^{ \tfrac{\gamma}{d}} -1+d\big[1-e^{- (\mu^A+\tfrac{\xi^A}{d})}\big] \big[1-e^{-(\mu^B+\tfrac{\xi^B}{d})}\big] }, \end{equation}
and the rate of post-selected rounds$/sec$  is given by
 \begin{align}
 R&(d,v)= C  e^{-d(\mu^A+\mu^B)}e^{\tfrac{(\xi^A+\xi^B)}{d}}\\\nonumber
 &\!\!\!\!\times\Bigg\{\!d^2\Big[1\!-\!e^{- \mu^A-\tfrac{\xi^A}{d}}\Big]\! \Big[1\!-\!e^{-\mu^B-\tfrac{\xi^B}{d}}\Big]\!+\!d\left( e^{ \tfrac{\gamma}{d}}\!-\!1\right)\!\!\Bigg\},
 \end{align}
  where $\mu^{A/B}$ is the average number of dark counts per detector, $\xi^{A/B}$ is the average number of uncorrelated photons due to the environment, losses and detector inefficiencies, $\gamma$ is the average number of detectable correlated photons, and finally $C$ is a related, also dimension-independent, parameter (for details see Appendix \ref{app:Spatial}). We can now express the achievable key rate in $bits/sec$ as $K(d)=R(d,v)K_{\text{TOT}}^{iso}(d,v,k)$.
  In Figure \ref{fig:spatial}, we plot $K(d)$ versus the total dimension $d$ for different choices of  subspace size $k$.
\begin{center}
\begin{figure}[h]
    \centering
    \includegraphics[width = 0.47\textwidth]{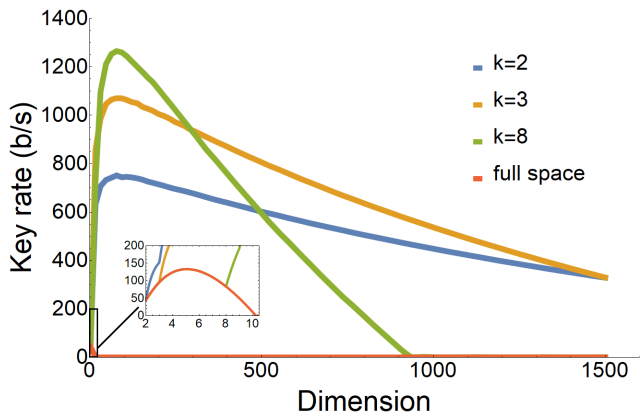}
    \caption{Achievable key rates versus dimension for different subspace encodings in the spatial domain. The parameters that we use here are: detector efficiencies $P_C=60\%$, losses for one party $P_L=98,4\%$, dark counts $\mu=600$ clicks/s, environmental photons $\nu=21000$  photons/s, coincidence window $\Delta t=10^{-7} s$, average laser photons (at source) $\lambda=200000$ photons/s. Increasing the Hilbert space by adding detectors imposes a natural limit beyond which the extra dimensions are not useful, but in fact detrimental. Also, different subspace sizes are optimal for different total dimensions. Except for small dimensions, the subspace encoding tremendously increases key rates over traditional full encoding.}
    \label{fig:spatial}
\end{figure}
\end{center}
\section{Conclusions} We presented a simultaneous subspace coding entanglement-based {HD} QKD protocol. Using two noise models for the most paradigmatic platforms for photonic HD entanglement, we showcase that the protocol can indeed provide a viable pathway towards practically improved QKD.
Most of the improvement comes from the fact that {encoding in} subspaces of HD systems {are} {is} more resilient to physical noise \cite{Ecker_2019}, compared to directly encoding in a comparable dimension.
It seems counter-intuitive that a qubit subspace reaches the highest noise resistance, when the entire premise is that HD systems are inherently more resistant to noise. The principal reason is that for the most common experimental implementations, the entanglement of the HD state is also present in two-dimensional subspaces, where the cost of error correction is the lowest. So, while larger subspaces feature more entanglement, smaller subspaces have a smaller error correction overhead, leading to this intricate interplay between noise, rate and subspace dimension that we observe. And in the cases of extremal noise they are the smallest subspaces that yield the highest (or any) key.  Surprisingly, the optimal subspace size in noisy scenarios often goes beyond two dimensions.
The actual value of achievable key rates highly depends on the implementation and specific noise parameters, from dark counts and background to losses and device fidelities. For all parameter ranges, HD encodings led to improved key rates. We believe that our noise models together with the SDPs for computing key rates for our protocol will be useful for optimizing system parameters for a broad family of future setups. {\cv {The security analysis and the noise models can be further refined and adjusted accordingly to account for other sources of noise and different noise regimes. For instance, at the noise regime that we consider, multi-photon detection events are negligible, therefore we discarded them without affecting the security of the protocol. However, in different noise regimes simply discarding these events might open up a security loophole and the security analysis should account for them, e.g. by treating them as additional noise (decreasing the key rate) depending on the implementation.}} \cv{Last but not least, the theoretical predictions for the achievable key rates presented in this paper were recently successfully verified in a proof of principle implementation of our protocol using photons entangled in path degree of freedom \cite{2020arXiv201103005H}. }
\vspace{\baselineskip}
\begin{acknowledgments}
The authors would like to thank Mateus Ara\'{u}jo for his comments. MH acknowledges funding from the Austrian Science Fund (FWF) through the START project Y879-N27.
CV acknowledges support from the Belgian Fonds de la Recherche Scientifique  -- FNRS, under grant no R.50.05.18.F (QuantAlgo). The QuantAlgo project has received funding from the QuantERA ERA-NET Cofund in Quantum Technologies implemented within the European Union's Horizon 2020 Programme. CV also acknowledges support from the Fundação para a Ciência e a Tecnologia (FCT) through national funds, by FEDER, COMPETE 2020, and by the Regional Operational Program of Lisbon, under UIDB/50008/2020 (actions QuRUNNER,
QUESTS) and QuantumMining POCI-01-0145-FEDER-031826. GM is funded by the Deutsche Forschungsgemeinschaft (DFG, German Research
Foundation) under Germany's Excellence Strategy – Cluster of Excellence
Matter and Light for Quantum Computing (ML4Q) EXC 2004/1 – 390534769.
MD, MPi and MPl acknowledge funding from VEGA project 2/0136/19. MPi and MPl additionally acknowledge GAMU project MUNI/G/1596/2019.
\end{acknowledgments}
\bibliography{biblio}{}

\begin{thebibliography}{65}%
\makeatletter
\providecommand \@ifxundefined [1]{%
 \@ifx{#1\undefined}
}%
\providecommand \@ifnum [1]{%
 \ifnum #1\expandafter \@firstoftwo
 \else \expandafter \@secondoftwo
 \fi
}%
\providecommand \@ifx [1]{%
 \ifx #1\expandafter \@firstoftwo
 \else \expandafter \@secondoftwo
 \fi
}%
\providecommand \natexlab [1]{#1}%
\providecommand \enquote  [1]{``#1''}%
\providecommand \bibnamefont  [1]{#1}%
\providecommand \bibfnamefont [1]{#1}%
\providecommand \citenamefont [1]{#1}%
\providecommand \href@noop [0]{\@secondoftwo}%
\providecommand \href [0]{\begingroup \@sanitize@url \@href}%
\providecommand \@href[1]{\@@startlink{#1}\@@href}%
\providecommand \@@href[1]{\endgroup#1\@@endlink}%
\providecommand \@sanitize@url [0]{\catcode `\\12\catcode `\$12\catcode
  `\&12\catcode `\#12\catcode `\^12\catcode `\_12\catcode `\%12\relax}%
\providecommand \@@startlink[1]{}%
\providecommand \@@endlink[0]{}%
\providecommand \url  [0]{\begingroup\@sanitize@url \@url }%
\providecommand \@url [1]{\endgroup\@href {#1}{\urlprefix }}%
\providecommand \urlprefix  [0]{URL }%
\providecommand \Eprint [0]{\href }%
\providecommand \doibase [0]{https://doi.org/}%
\providecommand \selectlanguage [0]{\@gobble}%
\providecommand \bibinfo  [0]{\@secondoftwo}%
\providecommand \bibfield  [0]{\@secondoftwo}%
\providecommand \translation [1]{[#1]}%
\providecommand \BibitemOpen [0]{}%
\providecommand \bibitemStop [0]{}%
\providecommand \bibitemNoStop [0]{.\EOS\space}%
\providecommand \EOS [0]{\spacefactor3000\relax}%
\providecommand \BibitemShut  [1]{\csname bibitem#1\endcsname}%
\let\auto@bib@innerbib\@empty
\bibitem [{\citenamefont {Bennett}\ and\ \citenamefont
  {Brassard}(2014)}]{BENNETT20147}%
  \BibitemOpen
  \bibfield  {author} {\bibinfo {author} {\bibfnamefont {C.~H.}\ \bibnamefont
  {Bennett}}\ and\ \bibinfo {author} {\bibfnamefont {G.}~\bibnamefont
  {Brassard}},\ }\bibfield  {title} {\bibinfo {title} {Quantum cryptography:
  Public key distribution and coin tossing},\ }\href
  {https://doi.org/https://doi.org/10.1016/j.tcs.2014.05.025} {\bibfield
  {journal} {\bibinfo  {journal} {Theoretical Computer Science}\ }\textbf
  {\bibinfo {volume} {560}},\ \bibinfo {pages} {7 } (\bibinfo {year} {2014})},\
  \bibinfo {note} {theoretical Aspects of Quantum Cryptography – celebrating
  30 years of BB84}\BibitemShut {NoStop}%
\bibitem [{\citenamefont {Ekert}(1991)}]{E91}%
  \BibitemOpen
  \bibfield  {author} {\bibinfo {author} {\bibfnamefont {A.~K.}\ \bibnamefont
  {Ekert}},\ }\bibfield  {title} {\bibinfo {title} {Quantum cryptography based
  on {Bell's} theorem},\ }\href {https://doi.org/10.1103/PhysRevLett.67.661}
  {\bibfield  {journal} {\bibinfo  {journal} {Phys. Rev. Lett.}\ }\textbf
  {\bibinfo {volume} {67}},\ \bibinfo {pages} {661} (\bibinfo {year}
  {1991})}\BibitemShut {NoStop}%
\bibitem [{\citenamefont {Scarani}\ \emph {et~al.}(2009)\citenamefont
  {Scarani}, \citenamefont {Bechmann-Pasquinucci}, \citenamefont {Cerf},
  \citenamefont {Du\ifmmode~\check{s}\else \v{s}\fi{}ek}, \citenamefont
  {L\"utkenhaus},\ and\ \citenamefont {Peev}}]{RevModPhys.81.1301}%
  \BibitemOpen
  \bibfield  {author} {\bibinfo {author} {\bibfnamefont {V.}~\bibnamefont
  {Scarani}}, \bibinfo {author} {\bibfnamefont {H.}~\bibnamefont
  {Bechmann-Pasquinucci}}, \bibinfo {author} {\bibfnamefont {N.~J.}\
  \bibnamefont {Cerf}}, \bibinfo {author} {\bibfnamefont {M.}~\bibnamefont
  {Du\ifmmode~\check{s}\else \v{s}\fi{}ek}}, \bibinfo {author} {\bibfnamefont
  {N.}~\bibnamefont {L\"utkenhaus}},\ and\ \bibinfo {author} {\bibfnamefont
  {M.}~\bibnamefont {Peev}},\ }\bibfield  {title} {\bibinfo {title} {The
  security of practical quantum key distribution},\ }\href
  {https://doi.org/10.1103/RevModPhys.81.1301} {\bibfield  {journal} {\bibinfo
  {journal} {Rev. Mod. Phys.}\ }\textbf {\bibinfo {volume} {81}},\ \bibinfo
  {pages} {1301} (\bibinfo {year} {2009})}\BibitemShut {NoStop}%
\bibitem [{\citenamefont {Lo}\ \emph {et~al.}(2014)\citenamefont {Lo},
  \citenamefont {Curty},\ and\ \citenamefont {Tamaki}}]{Lo2014}%
  \BibitemOpen
  \bibfield  {author} {\bibinfo {author} {\bibfnamefont {H.-K.}\ \bibnamefont
  {Lo}}, \bibinfo {author} {\bibfnamefont {M.}~\bibnamefont {Curty}},\ and\
  \bibinfo {author} {\bibfnamefont {K.}~\bibnamefont {Tamaki}},\ }\bibfield
  {title} {\bibinfo {title} {Secure quantum key distribution},\ }\href
  {https://doi.org/10.1038/nphoton.2014.149} {\bibfield  {journal} {\bibinfo
  {journal} {Nature Photonics}\ }\textbf {\bibinfo {volume} {8}},\ \bibinfo
  {pages} {595} (\bibinfo {year} {2014})}\BibitemShut {NoStop}%
\bibitem [{\citenamefont {{Pirandola}}\ \emph {et~al.}(2019)\citenamefont
  {{Pirandola}}, \citenamefont {{Andersen}}, \citenamefont {{Banchi}},
  \citenamefont {{Berta}}, \citenamefont {{Bunandar}}, \citenamefont
  {{Colbeck}}, \citenamefont {{Englund}}, \citenamefont {{Gehring}},
  \citenamefont {{Lupo}}, \citenamefont {{Ottaviani}}, \citenamefont
  {{Pereira}}, \citenamefont {{Razavi}}, \citenamefont {{Shaari}},
  \citenamefont {{Tomamichel}}, \citenamefont {{Usenko}}, \citenamefont
  {{Vallone}}, \citenamefont {{Villoresi}},\ and\ \citenamefont
  {{Wallden}}}]{2019arXiv190601645P}%
  \BibitemOpen
  \bibfield  {author} {\bibinfo {author} {\bibfnamefont {S.}~\bibnamefont
  {{Pirandola}}}, \bibinfo {author} {\bibfnamefont {U.~L.}\ \bibnamefont
  {{Andersen}}}, \bibinfo {author} {\bibfnamefont {L.}~\bibnamefont
  {{Banchi}}}, \bibinfo {author} {\bibfnamefont {M.}~\bibnamefont {{Berta}}},
  \bibinfo {author} {\bibfnamefont {D.}~\bibnamefont {{Bunandar}}}, \bibinfo
  {author} {\bibfnamefont {R.}~\bibnamefont {{Colbeck}}}, \bibinfo {author}
  {\bibfnamefont {D.}~\bibnamefont {{Englund}}}, \bibinfo {author}
  {\bibfnamefont {T.}~\bibnamefont {{Gehring}}}, \bibinfo {author}
  {\bibfnamefont {C.}~\bibnamefont {{Lupo}}}, \bibinfo {author} {\bibfnamefont
  {C.}~\bibnamefont {{Ottaviani}}}, \bibinfo {author} {\bibfnamefont
  {J.}~\bibnamefont {{Pereira}}}, \bibinfo {author} {\bibfnamefont
  {M.}~\bibnamefont {{Razavi}}}, \bibinfo {author} {\bibfnamefont {J.~S.}\
  \bibnamefont {{Shaari}}}, \bibinfo {author} {\bibfnamefont {M.}~\bibnamefont
  {{Tomamichel}}}, \bibinfo {author} {\bibfnamefont {V.~C.}\ \bibnamefont
  {{Usenko}}}, \bibinfo {author} {\bibfnamefont {G.}~\bibnamefont {{Vallone}}},
  \bibinfo {author} {\bibfnamefont {P.}~\bibnamefont {{Villoresi}}},\ and\
  \bibinfo {author} {\bibfnamefont {P.}~\bibnamefont {{Wallden}}},\ }\bibfield
  {title} {\bibinfo {title} {{Advances in Quantum Cryptography}},\ }\href@noop
  {} {\bibfield  {journal} {\bibinfo  {journal} {arXiv e-prints}\ ,\ \bibinfo
  {eid} {arXiv:1906.01645}} (\bibinfo {year} {2019})},\ \Eprint
  {https://arxiv.org/abs/1906.01645} {arXiv:1906.01645 [quant-ph]} \BibitemShut
  {NoStop}%
\bibitem [{\citenamefont {Bedington}\ \emph {et~al.}(2017)\citenamefont
  {Bedington}, \citenamefont {Arrazola},\ and\ \citenamefont
  {Ling}}]{Pracattack}%
  \BibitemOpen
  \bibfield  {author} {\bibinfo {author} {\bibfnamefont {R.}~\bibnamefont
  {Bedington}}, \bibinfo {author} {\bibfnamefont {J.~M.}\ \bibnamefont
  {Arrazola}},\ and\ \bibinfo {author} {\bibfnamefont {A.}~\bibnamefont
  {Ling}},\ }\bibfield  {title} {\bibinfo {title} {Progress in satellite
  quantum key distribution},\ }\href
  {https://doi.org/10.1038/s41534-017-0031-5} {\bibfield  {journal} {\bibinfo
  {journal} {npj Quantum Information}\ }\textbf {\bibinfo {volume} {3}},\
  \bibinfo {pages} {30} (\bibinfo {year} {2017})}\BibitemShut {NoStop}%
\bibitem [{\citenamefont {Rarity}\ and\ \citenamefont
  {Tapster}(1990)}]{PhysRevLett.64.2495}%
  \BibitemOpen
  \bibfield  {author} {\bibinfo {author} {\bibfnamefont {J.~G.}\ \bibnamefont
  {Rarity}}\ and\ \bibinfo {author} {\bibfnamefont {P.~R.}\ \bibnamefont
  {Tapster}},\ }\bibfield  {title} {\bibinfo {title} {Experimental violation of
  bell's inequality based on phase and momentum},\ }\href
  {https://doi.org/10.1103/PhysRevLett.64.2495} {\bibfield  {journal} {\bibinfo
   {journal} {Phys. Rev. Lett.}\ }\textbf {\bibinfo {volume} {64}},\ \bibinfo
  {pages} {2495} (\bibinfo {year} {1990})}\BibitemShut {NoStop}%
\bibitem [{\citenamefont {Molina-Terriza}\ \emph {et~al.}(2001)\citenamefont
  {Molina-Terriza}, \citenamefont {Torres},\ and\ \citenamefont
  {Torner}}]{PhysRevLett.88.013601}%
  \BibitemOpen
  \bibfield  {author} {\bibinfo {author} {\bibfnamefont {G.}~\bibnamefont
  {Molina-Terriza}}, \bibinfo {author} {\bibfnamefont {J.~P.}\ \bibnamefont
  {Torres}},\ and\ \bibinfo {author} {\bibfnamefont {L.}~\bibnamefont
  {Torner}},\ }\bibfield  {title} {\bibinfo {title} {Management of the angular
  momentum of light: Preparation of photons in multidimensional vector states
  of angular momentum},\ }\href {https://doi.org/10.1103/PhysRevLett.88.013601}
  {\bibfield  {journal} {\bibinfo  {journal} {Phys. Rev. Lett.}\ }\textbf
  {\bibinfo {volume} {88}},\ \bibinfo {pages} {013601} (\bibinfo {year}
  {2001})}\BibitemShut {NoStop}%
\bibitem [{\citenamefont {Monken}\ \emph {et~al.}(1998)\citenamefont {Monken},
  \citenamefont {Ribeiro},\ and\ \citenamefont {P\'adua}}]{PhysRevA.57.3123}%
  \BibitemOpen
  \bibfield  {author} {\bibinfo {author} {\bibfnamefont {C.~H.}\ \bibnamefont
  {Monken}}, \bibinfo {author} {\bibfnamefont {P.~H.~S.}\ \bibnamefont
  {Ribeiro}},\ and\ \bibinfo {author} {\bibfnamefont {S.}~\bibnamefont
  {P\'adua}},\ }\bibfield  {title} {\bibinfo {title} {Transfer of angular
  spectrum and image formation in spontaneous parametric down-conversion},\
  }\href {https://doi.org/10.1103/PhysRevA.57.3123} {\bibfield  {journal}
  {\bibinfo  {journal} {Phys. Rev. A}\ }\textbf {\bibinfo {volume} {57}},\
  \bibinfo {pages} {3123} (\bibinfo {year} {1998})}\BibitemShut {NoStop}%
\bibitem [{\citenamefont {Molina-Terriza}\ \emph {et~al.}(2007)\citenamefont
  {Molina-Terriza}, \citenamefont {Torres},\ and\ \citenamefont
  {Torner}}]{Molina-Terriza2007}%
  \BibitemOpen
  \bibfield  {author} {\bibinfo {author} {\bibfnamefont {G.}~\bibnamefont
  {Molina-Terriza}}, \bibinfo {author} {\bibfnamefont {J.~P.}\ \bibnamefont
  {Torres}},\ and\ \bibinfo {author} {\bibfnamefont {L.}~\bibnamefont
  {Torner}},\ }\bibfield  {title} {\bibinfo {title} {Twisted photons},\ }\href
  {https://doi.org/10.1038/nphys607} {\bibfield  {journal} {\bibinfo  {journal}
  {Nature Physics}\ }\textbf {\bibinfo {volume} {3}},\ \bibinfo {pages} {305}
  (\bibinfo {year} {2007})}\BibitemShut {NoStop}%
\bibitem [{\citenamefont {Neves}\ \emph {et~al.}(2005)\citenamefont {Neves},
  \citenamefont {Lima}, \citenamefont {Aguirre~G\'omez}, \citenamefont
  {Monken}, \citenamefont {Saavedra},\ and\ \citenamefont
  {P\'adua}}]{PhysRevLett.94.100501}%
  \BibitemOpen
  \bibfield  {author} {\bibinfo {author} {\bibfnamefont {L.}~\bibnamefont
  {Neves}}, \bibinfo {author} {\bibfnamefont {G.}~\bibnamefont {Lima}},
  \bibinfo {author} {\bibfnamefont {J.~G.}\ \bibnamefont {Aguirre~G\'omez}},
  \bibinfo {author} {\bibfnamefont {C.~H.}\ \bibnamefont {Monken}}, \bibinfo
  {author} {\bibfnamefont {C.}~\bibnamefont {Saavedra}},\ and\ \bibinfo
  {author} {\bibfnamefont {S.}~\bibnamefont {P\'adua}},\ }\bibfield  {title}
  {\bibinfo {title} {Generation of entangled states of qudits using twin
  photons},\ }\href {https://doi.org/10.1103/PhysRevLett.94.100501} {\bibfield
  {journal} {\bibinfo  {journal} {Phys. Rev. Lett.}\ }\textbf {\bibinfo
  {volume} {94}},\ \bibinfo {pages} {100501} (\bibinfo {year}
  {2005})}\BibitemShut {NoStop}%
\bibitem [{\citenamefont {Walborn}\ \emph {et~al.}(2010)\citenamefont
  {Walborn}, \citenamefont {Monken}, \citenamefont {Pádua},\ and\
  \citenamefont {{Souto Ribeiro}}}]{WALBORN201087}%
  \BibitemOpen
  \bibfield  {author} {\bibinfo {author} {\bibfnamefont {S.}~\bibnamefont
  {Walborn}}, \bibinfo {author} {\bibfnamefont {C.}~\bibnamefont {Monken}},
  \bibinfo {author} {\bibfnamefont {S.}~\bibnamefont {Pádua}},\ and\ \bibinfo
  {author} {\bibfnamefont {P.}~\bibnamefont {{Souto Ribeiro}}},\ }\bibfield
  {title} {\bibinfo {title} {Spatial correlations in parametric
  down-conversion},\ }\href
  {https://doi.org/https://doi.org/10.1016/j.physrep.2010.06.003} {\bibfield
  {journal} {\bibinfo  {journal} {Physics Reports}\ }\textbf {\bibinfo {volume}
  {495}},\ \bibinfo {pages} {87 } (\bibinfo {year} {2010})}\BibitemShut
  {NoStop}%
\bibitem [{\citenamefont {Lima}\ \emph {et~al.}(2011)\citenamefont {Lima},
  \citenamefont {Neves}, \citenamefont {Guzm\'{a}n}, \citenamefont {G\'{o}mez},
  \citenamefont {Nogueira}, \citenamefont {Delgado}, \citenamefont {Vargas},\
  and\ \citenamefont {Saavedra}}]{Lima:11}%
  \BibitemOpen
  \bibfield  {author} {\bibinfo {author} {\bibfnamefont {G.}~\bibnamefont
  {Lima}}, \bibinfo {author} {\bibfnamefont {L.}~\bibnamefont {Neves}},
  \bibinfo {author} {\bibfnamefont {R.}~\bibnamefont {Guzm\'{a}n}}, \bibinfo
  {author} {\bibfnamefont {E.~S.}\ \bibnamefont {G\'{o}mez}}, \bibinfo {author}
  {\bibfnamefont {W.~A.~T.}\ \bibnamefont {Nogueira}}, \bibinfo {author}
  {\bibfnamefont {A.}~\bibnamefont {Delgado}}, \bibinfo {author} {\bibfnamefont
  {A.}~\bibnamefont {Vargas}},\ and\ \bibinfo {author} {\bibfnamefont
  {C.}~\bibnamefont {Saavedra}},\ }\bibfield  {title} {\bibinfo {title}
  {Experimental quantum tomography of photonic qudits via mutually unbiased
  basis},\ }\href {https://doi.org/10.1364/OE.19.003542} {\bibfield  {journal}
  {\bibinfo  {journal} {Opt. Express}\ }\textbf {\bibinfo {volume} {19}},\
  \bibinfo {pages} {3542} (\bibinfo {year} {2011})}\BibitemShut {NoStop}%
\bibitem [{\citenamefont {Karimi}\ \emph {et~al.}(2014)\citenamefont {Karimi},
  \citenamefont {Schulz}, \citenamefont {De~Leon}, \citenamefont {Qassim},
  \citenamefont {Upham},\ and\ \citenamefont {Boyd}}]{Karimi2014}%
  \BibitemOpen
  \bibfield  {author} {\bibinfo {author} {\bibfnamefont {E.}~\bibnamefont
  {Karimi}}, \bibinfo {author} {\bibfnamefont {S.~A.}\ \bibnamefont {Schulz}},
  \bibinfo {author} {\bibfnamefont {I.}~\bibnamefont {De~Leon}}, \bibinfo
  {author} {\bibfnamefont {H.}~\bibnamefont {Qassim}}, \bibinfo {author}
  {\bibfnamefont {J.}~\bibnamefont {Upham}},\ and\ \bibinfo {author}
  {\bibfnamefont {R.~W.}\ \bibnamefont {Boyd}},\ }\bibfield  {title} {\bibinfo
  {title} {Generating optical orbital angular momentum at visible wavelengths
  using a plasmonic metasurface},\ }\href {https://doi.org/10.1038/lsa.2014.48}
  {\bibfield  {journal} {\bibinfo  {journal} {Light: Science {\&}
  Applications}\ }\textbf {\bibinfo {volume} {3}},\ \bibinfo {pages} {e167}
  (\bibinfo {year} {2014})}\BibitemShut {NoStop}%
\bibitem [{\citenamefont {Rubinsztein-Dunlop}\ \emph
  {et~al.}(2016)\citenamefont {Rubinsztein-Dunlop}, \citenamefont {Forbes},
  \citenamefont {Berry}, \citenamefont {Dennis}, \citenamefont {Andrews},
  \citenamefont {Mansuripur}, \citenamefont {Denz}, \citenamefont {Alpmann},
  \citenamefont {Banzer}, \citenamefont {Bauer}, \citenamefont {Karimi},
  \citenamefont {Marrucci}, \citenamefont {Padgett}, \citenamefont
  {Ritsch-Marte}, \citenamefont {Litchinitser}, \citenamefont {Bigelow},
  \citenamefont {Rosales-Guzm{\'{a}}n}, \citenamefont {Belmonte}, \citenamefont
  {Torres}, \citenamefont {Neely}, \citenamefont {Baker}, \citenamefont
  {Gordon}, \citenamefont {Stilgoe}, \citenamefont {Romero}, \citenamefont
  {White}, \citenamefont {Fickler}, \citenamefont {Willner}, \citenamefont
  {Xie}, \citenamefont {McMorran},\ and\ \citenamefont
  {Weiner}}]{Rubinsztein_Dunlop_2016}%
  \BibitemOpen
  \bibfield  {author} {\bibinfo {author} {\bibfnamefont {H.}~\bibnamefont
  {Rubinsztein-Dunlop}}, \bibinfo {author} {\bibfnamefont {A.}~\bibnamefont
  {Forbes}}, \bibinfo {author} {\bibfnamefont {M.~V.}\ \bibnamefont {Berry}},
  \bibinfo {author} {\bibfnamefont {M.~R.}\ \bibnamefont {Dennis}}, \bibinfo
  {author} {\bibfnamefont {D.~L.}\ \bibnamefont {Andrews}}, \bibinfo {author}
  {\bibfnamefont {M.}~\bibnamefont {Mansuripur}}, \bibinfo {author}
  {\bibfnamefont {C.}~\bibnamefont {Denz}}, \bibinfo {author} {\bibfnamefont
  {C.}~\bibnamefont {Alpmann}}, \bibinfo {author} {\bibfnamefont
  {P.}~\bibnamefont {Banzer}}, \bibinfo {author} {\bibfnamefont
  {T.}~\bibnamefont {Bauer}}, \bibinfo {author} {\bibfnamefont
  {E.}~\bibnamefont {Karimi}}, \bibinfo {author} {\bibfnamefont
  {L.}~\bibnamefont {Marrucci}}, \bibinfo {author} {\bibfnamefont
  {M.}~\bibnamefont {Padgett}}, \bibinfo {author} {\bibfnamefont
  {M.}~\bibnamefont {Ritsch-Marte}}, \bibinfo {author} {\bibfnamefont {N.~M.}\
  \bibnamefont {Litchinitser}}, \bibinfo {author} {\bibfnamefont {N.~P.}\
  \bibnamefont {Bigelow}}, \bibinfo {author} {\bibfnamefont {C.}~\bibnamefont
  {Rosales-Guzm{\'{a}}n}}, \bibinfo {author} {\bibfnamefont {A.}~\bibnamefont
  {Belmonte}}, \bibinfo {author} {\bibfnamefont {J.~P.}\ \bibnamefont
  {Torres}}, \bibinfo {author} {\bibfnamefont {T.~W.}\ \bibnamefont {Neely}},
  \bibinfo {author} {\bibfnamefont {M.}~\bibnamefont {Baker}}, \bibinfo
  {author} {\bibfnamefont {R.}~\bibnamefont {Gordon}}, \bibinfo {author}
  {\bibfnamefont {A.~B.}\ \bibnamefont {Stilgoe}}, \bibinfo {author}
  {\bibfnamefont {J.}~\bibnamefont {Romero}}, \bibinfo {author} {\bibfnamefont
  {A.~G.}\ \bibnamefont {White}}, \bibinfo {author} {\bibfnamefont
  {R.}~\bibnamefont {Fickler}}, \bibinfo {author} {\bibfnamefont {A.~E.}\
  \bibnamefont {Willner}}, \bibinfo {author} {\bibfnamefont {G.}~\bibnamefont
  {Xie}}, \bibinfo {author} {\bibfnamefont {B.}~\bibnamefont {McMorran}},\ and\
  \bibinfo {author} {\bibfnamefont {A.~M.}\ \bibnamefont {Weiner}},\ }\bibfield
   {title} {\bibinfo {title} {Roadmap on structured light},\ }\href
  {https://doi.org/10.1088/2040-8978/19/1/013001} {\bibfield  {journal}
  {\bibinfo  {journal} {Journal of Optics}\ }\textbf {\bibinfo {volume} {19}},\
  \bibinfo {pages} {013001} (\bibinfo {year} {2016})}\BibitemShut {NoStop}%
\bibitem [{\citenamefont {Llewellyn}\ \emph {et~al.}(2019)\citenamefont
  {Llewellyn}, \citenamefont {Ding}, \citenamefont {Faruque}, \citenamefont
  {Paesani}, \citenamefont {Bacco}, \citenamefont {Santagati}, \citenamefont
  {Qian}, \citenamefont {Li}, \citenamefont {Xiao}, \citenamefont {Huber},\
  and\ \citenamefont {et~al.}}]{Llewellyn_2019}%
  \BibitemOpen
  \bibfield  {author} {\bibinfo {author} {\bibfnamefont {D.}~\bibnamefont
  {Llewellyn}}, \bibinfo {author} {\bibfnamefont {Y.}~\bibnamefont {Ding}},
  \bibinfo {author} {\bibfnamefont {I.~I.}\ \bibnamefont {Faruque}}, \bibinfo
  {author} {\bibfnamefont {S.}~\bibnamefont {Paesani}}, \bibinfo {author}
  {\bibfnamefont {D.}~\bibnamefont {Bacco}}, \bibinfo {author} {\bibfnamefont
  {R.}~\bibnamefont {Santagati}}, \bibinfo {author} {\bibfnamefont {Y.-J.}\
  \bibnamefont {Qian}}, \bibinfo {author} {\bibfnamefont {Y.}~\bibnamefont
  {Li}}, \bibinfo {author} {\bibfnamefont {Y.-F.}\ \bibnamefont {Xiao}},
  \bibinfo {author} {\bibfnamefont {M.}~\bibnamefont {Huber}},\ and\ \bibinfo
  {author} {\bibnamefont {et~al.}},\ }\bibfield  {title} {\bibinfo {title}
  {Chip-to-chip quantum teleportation and multi-photon entanglement in
  silicon},\ }\href {https://doi.org/10.1038/s41567-019-0727-x} {\bibfield
  {journal} {\bibinfo  {journal} {Nature Physics}\ }\textbf {\bibinfo {volume}
  {16}},\ \bibinfo {pages} {148–153} (\bibinfo {year} {2019})}\BibitemShut
  {NoStop}%
\bibitem [{\citenamefont {Bavaresco}\ \emph {et~al.}(2018)\citenamefont
  {Bavaresco}, \citenamefont {Herrera~Valencia}, \citenamefont {Klöckl},
  \citenamefont {Pivoluska}, \citenamefont {Erker}, \citenamefont {Friis},
  \citenamefont {Malik},\ and\ \citenamefont {Huber}}]{Bavaresco_2018}%
  \BibitemOpen
  \bibfield  {author} {\bibinfo {author} {\bibfnamefont {J.}~\bibnamefont
  {Bavaresco}}, \bibinfo {author} {\bibfnamefont {N.}~\bibnamefont
  {Herrera~Valencia}}, \bibinfo {author} {\bibfnamefont {C.}~\bibnamefont
  {Klöckl}}, \bibinfo {author} {\bibfnamefont {M.}~\bibnamefont {Pivoluska}},
  \bibinfo {author} {\bibfnamefont {P.}~\bibnamefont {Erker}}, \bibinfo
  {author} {\bibfnamefont {N.}~\bibnamefont {Friis}}, \bibinfo {author}
  {\bibfnamefont {M.}~\bibnamefont {Malik}},\ and\ \bibinfo {author}
  {\bibfnamefont {M.}~\bibnamefont {Huber}},\ }\bibfield  {title} {\bibinfo
  {title} {Measurements in two bases are sufficient for certifying
  high-dimensional entanglement},\ }\href
  {https://doi.org/10.1038/s41567-018-0203-z} {\bibfield  {journal} {\bibinfo
  {journal} {Nature Physics}\ }\textbf {\bibinfo {volume} {14}},\ \bibinfo
  {pages} {1032–1037} (\bibinfo {year} {2018})}\BibitemShut {NoStop}%
\bibitem [{\citenamefont {Schaeff}\ \emph {et~al.}(2015)\citenamefont
  {Schaeff}, \citenamefont {Polster}, \citenamefont {Huber}, \citenamefont
  {Ramelow},\ and\ \citenamefont {Zeilinger}}]{Schaeff_2015}%
  \BibitemOpen
  \bibfield  {author} {\bibinfo {author} {\bibfnamefont {C.}~\bibnamefont
  {Schaeff}}, \bibinfo {author} {\bibfnamefont {R.}~\bibnamefont {Polster}},
  \bibinfo {author} {\bibfnamefont {M.}~\bibnamefont {Huber}}, \bibinfo
  {author} {\bibfnamefont {S.}~\bibnamefont {Ramelow}},\ and\ \bibinfo {author}
  {\bibfnamefont {A.}~\bibnamefont {Zeilinger}},\ }\bibfield  {title} {\bibinfo
  {title} {Experimental access to higher-dimensional entangled quantum systems
  using integrated optics},\ }\href {https://doi.org/10.1364/optica.2.000523}
  {\bibfield  {journal} {\bibinfo  {journal} {Optica}\ }\textbf {\bibinfo
  {volume} {2}},\ \bibinfo {pages} {523} (\bibinfo {year} {2015})}\BibitemShut
  {NoStop}%
\bibitem [{\citenamefont {Schneeloch}\ \emph {et~al.}(2019)\citenamefont
  {Schneeloch}, \citenamefont {Tison}, \citenamefont {Fanto}, \citenamefont
  {Alsing},\ and\ \citenamefont {Howland}}]{Schneeloch_2019}%
  \BibitemOpen
  \bibfield  {author} {\bibinfo {author} {\bibfnamefont {J.}~\bibnamefont
  {Schneeloch}}, \bibinfo {author} {\bibfnamefont {C.~C.}\ \bibnamefont
  {Tison}}, \bibinfo {author} {\bibfnamefont {M.~L.}\ \bibnamefont {Fanto}},
  \bibinfo {author} {\bibfnamefont {P.~M.}\ \bibnamefont {Alsing}},\ and\
  \bibinfo {author} {\bibfnamefont {G.~A.}\ \bibnamefont {Howland}},\
  }\bibfield  {title} {\bibinfo {title} {Quantifying entanglement in a
  68-billion-dimensional quantum state space},\ }\bibfield  {journal} {\bibinfo
   {journal} {Nature Communications}\ }\textbf {\bibinfo {volume} {10}},\ \href
  {https://doi.org/10.1038/s41467-019-10810-z} {10.1038/s41467-019-10810-z}
  (\bibinfo {year} {2019})\BibitemShut {NoStop}%
\bibitem [{\citenamefont {{Herrera Valencia}}\ \emph
  {et~al.}(2020)\citenamefont {{Herrera Valencia}}, \citenamefont
  {{Srivastav}}, \citenamefont {{Pivoluska}}, \citenamefont {{Huber}},
  \citenamefont {{Friis}}, \citenamefont {{McCutcheon}},\ and\ \citenamefont
  {{Malik}}}]{2020arXiv200404994H}%
  \BibitemOpen
  \bibfield  {author} {\bibinfo {author} {\bibfnamefont {N.}~\bibnamefont
  {{Herrera Valencia}}}, \bibinfo {author} {\bibfnamefont {V.}~\bibnamefont
  {{Srivastav}}}, \bibinfo {author} {\bibfnamefont {M.}~\bibnamefont
  {{Pivoluska}}}, \bibinfo {author} {\bibfnamefont {M.}~\bibnamefont
  {{Huber}}}, \bibinfo {author} {\bibfnamefont {N.}~\bibnamefont {{Friis}}},
  \bibinfo {author} {\bibfnamefont {W.}~\bibnamefont {{McCutcheon}}},\ and\
  \bibinfo {author} {\bibfnamefont {M.}~\bibnamefont {{Malik}}},\ }\bibfield
  {title} {\bibinfo {title} {{High-Dimensional Pixel Entanglement: Efficient
  Generation and Certification}},\ }\href@noop {} {\bibfield  {journal}
  {\bibinfo  {journal} {arXiv e-prints}\ ,\ \bibinfo {eid} {arXiv:2004.04994}}
  (\bibinfo {year} {2020})},\ \Eprint {https://arxiv.org/abs/2004.04994}
  {arXiv:2004.04994 [quant-ph]} \BibitemShut {NoStop}%
\bibitem [{\citenamefont {G{\'o}mez}\ \emph {et~al.}(2020)\citenamefont
  {G{\'o}mez}, \citenamefont {G{\'o}mez}, \citenamefont {Machuca},
  \citenamefont {Cabello}, \citenamefont {P'adua}, \citenamefont {Walborn},\
  and\ \citenamefont {Lima}}]{Gmez2020MultidimensionalEG}%
  \BibitemOpen
  \bibfield  {author} {\bibinfo {author} {\bibfnamefont {E.~S.}\ \bibnamefont
  {G{\'o}mez}}, \bibinfo {author} {\bibfnamefont {S.}~\bibnamefont
  {G{\'o}mez}}, \bibinfo {author} {\bibfnamefont {I.}~\bibnamefont {Machuca}},
  \bibinfo {author} {\bibfnamefont {A.}~\bibnamefont {Cabello}}, \bibinfo
  {author} {\bibfnamefont {S.}~\bibnamefont {P'adua}}, \bibinfo {author}
  {\bibfnamefont {S.~P.}\ \bibnamefont {Walborn}},\ and\ \bibinfo {author}
  {\bibfnamefont {G.}~\bibnamefont {Lima}},\ }\bibfield  {title} {\bibinfo
  {title} {Multi-dimensional entanglement generation with multi-core optical
  fibers.},\ }\href@noop {} {\bibfield  {journal} {\bibinfo  {journal} {arXiv:
  Quantum Physics}\ } (\bibinfo {year} {2020})}\BibitemShut {NoStop}%
\bibitem [{\citenamefont {{Hu}}\ \emph
  {et~al.}(2020{\natexlab{a}})\citenamefont {{Hu}}, \citenamefont {{Xing}},
  \citenamefont {{Liu}}, \citenamefont {{Huang}}, \citenamefont {{Li}},
  \citenamefont {{Guo}}, \citenamefont {{Erker}},\ and\ \citenamefont
  {{Huber}}}]{2020arXiv200409964H}%
  \BibitemOpen
  \bibfield  {author} {\bibinfo {author} {\bibfnamefont {X.-M.}\ \bibnamefont
  {{Hu}}}, \bibinfo {author} {\bibfnamefont {W.-B.}\ \bibnamefont {{Xing}}},
  \bibinfo {author} {\bibfnamefont {B.-H.}\ \bibnamefont {{Liu}}}, \bibinfo
  {author} {\bibfnamefont {Y.-F.}\ \bibnamefont {{Huang}}}, \bibinfo {author}
  {\bibfnamefont {C.-F.}\ \bibnamefont {{Li}}}, \bibinfo {author}
  {\bibfnamefont {G.-C.}\ \bibnamefont {{Guo}}}, \bibinfo {author}
  {\bibfnamefont {P.}~\bibnamefont {{Erker}}},\ and\ \bibinfo {author}
  {\bibfnamefont {M.}~\bibnamefont {{Huber}}},\ }\bibfield  {title} {\bibinfo
  {title} {{Efficient generation of high-dimensional entanglement through
  multi-path downconversion}},\ }\href@noop {} {\bibfield  {journal} {\bibinfo
  {journal} {arXiv e-prints}\ ,\ \bibinfo {eid} {arXiv:2004.09964}} (\bibinfo
  {year} {2020}{\natexlab{a}})},\ \Eprint {https://arxiv.org/abs/2004.09964}
  {arXiv:2004.09964 [quant-ph]} \BibitemShut {NoStop}%
\bibitem [{\citenamefont {Bernhard}\ \emph {et~al.}(2013)\citenamefont
  {Bernhard}, \citenamefont {Bessire}, \citenamefont {Feurer},\ and\
  \citenamefont {Stefanov}}]{PhysRevA.88.032322}%
  \BibitemOpen
  \bibfield  {author} {\bibinfo {author} {\bibfnamefont {C.}~\bibnamefont
  {Bernhard}}, \bibinfo {author} {\bibfnamefont {B.}~\bibnamefont {Bessire}},
  \bibinfo {author} {\bibfnamefont {T.}~\bibnamefont {Feurer}},\ and\ \bibinfo
  {author} {\bibfnamefont {A.}~\bibnamefont {Stefanov}},\ }\bibfield  {title}
  {\bibinfo {title} {Shaping frequency-entangled qudits},\ }\href
  {https://doi.org/10.1103/PhysRevA.88.032322} {\bibfield  {journal} {\bibinfo
  {journal} {Phys. Rev. A}\ }\textbf {\bibinfo {volume} {88}},\ \bibinfo
  {pages} {032322} (\bibinfo {year} {2013})}\BibitemShut {NoStop}%
\bibitem [{\citenamefont {Ramelow}\ \emph {et~al.}(2009)\citenamefont
  {Ramelow}, \citenamefont {Ratschbacher}, \citenamefont {Fedrizzi},
  \citenamefont {Langford},\ and\ \citenamefont
  {Zeilinger}}]{PhysRevLett.103.253601}%
  \BibitemOpen
  \bibfield  {author} {\bibinfo {author} {\bibfnamefont {S.}~\bibnamefont
  {Ramelow}}, \bibinfo {author} {\bibfnamefont {L.}~\bibnamefont
  {Ratschbacher}}, \bibinfo {author} {\bibfnamefont {A.}~\bibnamefont
  {Fedrizzi}}, \bibinfo {author} {\bibfnamefont {N.~K.}\ \bibnamefont
  {Langford}},\ and\ \bibinfo {author} {\bibfnamefont {A.}~\bibnamefont
  {Zeilinger}},\ }\bibfield  {title} {\bibinfo {title} {Discrete tunable color
  entanglement},\ }\href {https://doi.org/10.1103/PhysRevLett.103.253601}
  {\bibfield  {journal} {\bibinfo  {journal} {Phys. Rev. Lett.}\ }\textbf
  {\bibinfo {volume} {103}},\ \bibinfo {pages} {253601} (\bibinfo {year}
  {2009})}\BibitemShut {NoStop}%
\bibitem [{\citenamefont {Kwiat}(1997)}]{doi:10.1080/09500349708231877}%
  \BibitemOpen
  \bibfield  {author} {\bibinfo {author} {\bibfnamefont {P.~G.}\ \bibnamefont
  {Kwiat}},\ }\bibfield  {title} {\bibinfo {title} {Hyper-entangled states},\
  }\href {https://doi.org/10.1080/09500349708231877} {\bibfield  {journal}
  {\bibinfo  {journal} {Journal of Modern Optics}\ }\textbf {\bibinfo {volume}
  {44}},\ \bibinfo {pages} {2173} (\bibinfo {year} {1997})}\BibitemShut
  {NoStop}%
\bibitem [{\citenamefont {Barreiro}\ \emph {et~al.}(2005)\citenamefont
  {Barreiro}, \citenamefont {Langford}, \citenamefont {Peters},\ and\
  \citenamefont {Kwiat}}]{PhysRevLett.95.260501}%
  \BibitemOpen
  \bibfield  {author} {\bibinfo {author} {\bibfnamefont {J.~T.}\ \bibnamefont
  {Barreiro}}, \bibinfo {author} {\bibfnamefont {N.~K.}\ \bibnamefont
  {Langford}}, \bibinfo {author} {\bibfnamefont {N.~A.}\ \bibnamefont
  {Peters}},\ and\ \bibinfo {author} {\bibfnamefont {P.~G.}\ \bibnamefont
  {Kwiat}},\ }\bibfield  {title} {\bibinfo {title} {Generation of
  hyperentangled photon pairs},\ }\href
  {https://doi.org/10.1103/PhysRevLett.95.260501} {\bibfield  {journal}
  {\bibinfo  {journal} {Phys. Rev. Lett.}\ }\textbf {\bibinfo {volume} {95}},\
  \bibinfo {pages} {260501} (\bibinfo {year} {2005})}\BibitemShut {NoStop}%
\bibitem [{\citenamefont {Vergyris}\ \emph {et~al.}(2019)\citenamefont
  {Vergyris}, \citenamefont {Mazeas}, \citenamefont {Gouzien}, \citenamefont
  {Labont{\'{e}}}, \citenamefont {Alibart}, \citenamefont {Tanzilli},\ and\
  \citenamefont {Kaiser}}]{Vergyris_2019}%
  \BibitemOpen
  \bibfield  {author} {\bibinfo {author} {\bibfnamefont {P.}~\bibnamefont
  {Vergyris}}, \bibinfo {author} {\bibfnamefont {F.}~\bibnamefont {Mazeas}},
  \bibinfo {author} {\bibfnamefont {E.}~\bibnamefont {Gouzien}}, \bibinfo
  {author} {\bibfnamefont {L.}~\bibnamefont {Labont{\'{e}}}}, \bibinfo {author}
  {\bibfnamefont {O.}~\bibnamefont {Alibart}}, \bibinfo {author} {\bibfnamefont
  {S.}~\bibnamefont {Tanzilli}},\ and\ \bibinfo {author} {\bibfnamefont
  {F.}~\bibnamefont {Kaiser}},\ }\bibfield  {title} {\bibinfo {title} {Fibre
  based hyperentanglement generation for dense wavelength division
  multiplexing},\ }\href {https://doi.org/10.1088/2058-9565/ab3f59} {\bibfield
  {journal} {\bibinfo  {journal} {Quantum Science and Technology}\ }\textbf
  {\bibinfo {volume} {4}},\ \bibinfo {pages} {045007} (\bibinfo {year}
  {2019})}\BibitemShut {NoStop}%
\bibitem [{\citenamefont {Imany}\ \emph {et~al.}(2019)\citenamefont {Imany},
  \citenamefont {Jaramillo-Villegas}, \citenamefont {Alshaykh}, \citenamefont
  {Lukens}, \citenamefont {Odele}, \citenamefont {Moore}, \citenamefont
  {Leaird}, \citenamefont {Qi},\ and\ \citenamefont {Weiner}}]{Imany2019}%
  \BibitemOpen
  \bibfield  {author} {\bibinfo {author} {\bibfnamefont {P.}~\bibnamefont
  {Imany}}, \bibinfo {author} {\bibfnamefont {J.~A.}\ \bibnamefont
  {Jaramillo-Villegas}}, \bibinfo {author} {\bibfnamefont {M.~S.}\ \bibnamefont
  {Alshaykh}}, \bibinfo {author} {\bibfnamefont {J.~M.}\ \bibnamefont
  {Lukens}}, \bibinfo {author} {\bibfnamefont {O.~D.}\ \bibnamefont {Odele}},
  \bibinfo {author} {\bibfnamefont {A.~J.}\ \bibnamefont {Moore}}, \bibinfo
  {author} {\bibfnamefont {D.~E.}\ \bibnamefont {Leaird}}, \bibinfo {author}
  {\bibfnamefont {M.}~\bibnamefont {Qi}},\ and\ \bibinfo {author}
  {\bibfnamefont {A.~M.}\ \bibnamefont {Weiner}},\ }\bibfield  {title}
  {\bibinfo {title} {High-dimensional optical quantum logic in large
  operational spaces},\ }\href {https://doi.org/10.1038/s41534-019-0173-8}
  {\bibfield  {journal} {\bibinfo  {journal} {npj Quantum Information}\
  }\textbf {\bibinfo {volume} {5}},\ \bibinfo {pages} {59} (\bibinfo {year}
  {2019})}\BibitemShut {NoStop}%
\bibitem [{\citenamefont {Chen}\ \emph {et~al.}(2020)\citenamefont {Chen},
  \citenamefont {Ecker}, \citenamefont {Bavaresco}, \citenamefont {Scheidl},
  \citenamefont {Chen}, \citenamefont {Steinlechner}, \citenamefont {Huber},\
  and\ \citenamefont {Ursin}}]{PhysRevA.101.032302}%
  \BibitemOpen
  \bibfield  {author} {\bibinfo {author} {\bibfnamefont {Y.}~\bibnamefont
  {Chen}}, \bibinfo {author} {\bibfnamefont {S.}~\bibnamefont {Ecker}},
  \bibinfo {author} {\bibfnamefont {J.}~\bibnamefont {Bavaresco}}, \bibinfo
  {author} {\bibfnamefont {T.}~\bibnamefont {Scheidl}}, \bibinfo {author}
  {\bibfnamefont {L.}~\bibnamefont {Chen}}, \bibinfo {author} {\bibfnamefont
  {F.}~\bibnamefont {Steinlechner}}, \bibinfo {author} {\bibfnamefont
  {M.}~\bibnamefont {Huber}},\ and\ \bibinfo {author} {\bibfnamefont
  {R.}~\bibnamefont {Ursin}},\ }\bibfield  {title} {\bibinfo {title}
  {Verification of high-dimensional entanglement generated in quantum
  interference},\ }\href {https://doi.org/10.1103/PhysRevA.101.032302}
  {\bibfield  {journal} {\bibinfo  {journal} {Phys. Rev. A}\ }\textbf {\bibinfo
  {volume} {101}},\ \bibinfo {pages} {032302} (\bibinfo {year}
  {2020})}\BibitemShut {NoStop}%
\bibitem [{\citenamefont {Acin}\ \emph {et~al.}(2003)\citenamefont {Acin},
  \citenamefont {Gisin},\ and\ \citenamefont {Scarani}}]{AGS03}%
  \BibitemOpen
  \bibfield  {author} {\bibinfo {author} {\bibfnamefont {A.}~\bibnamefont
  {Acin}}, \bibinfo {author} {\bibfnamefont {N.}~\bibnamefont {Gisin}},\ and\
  \bibinfo {author} {\bibfnamefont {V.}~\bibnamefont {Scarani}},\ }\bibfield
  {title} {\bibinfo {title} {Security bounds in quantum cryptography using
  d-level systems},\ }\href@noop {} {\bibfield  {journal} {\bibinfo  {journal}
  {Quantum Info. Comput.}\ }\textbf {\bibinfo {volume} {3}},\ \bibinfo {pages}
  {563–580} (\bibinfo {year} {2003})}\BibitemShut {NoStop}%
\bibitem [{\citenamefont {Sheridan}\ and\ \citenamefont
  {Scarani}(2010)}]{PhysRevA.82.030301}%
  \BibitemOpen
  \bibfield  {author} {\bibinfo {author} {\bibfnamefont {L.}~\bibnamefont
  {Sheridan}}\ and\ \bibinfo {author} {\bibfnamefont {V.}~\bibnamefont
  {Scarani}},\ }\bibfield  {title} {\bibinfo {title} {Security proof for
  quantum key distribution using qudit systems},\ }\href
  {https://doi.org/10.1103/PhysRevA.82.030301} {\bibfield  {journal} {\bibinfo
  {journal} {Phys. Rev. A}\ }\textbf {\bibinfo {volume} {82}},\ \bibinfo
  {pages} {030301} (\bibinfo {year} {2010})}\BibitemShut {NoStop}%
\bibitem [{\citenamefont {Huber}\ and\ \citenamefont
  {Paw\l{}owski}(2013)}]{PhysRevA.88.032309}%
  \BibitemOpen
  \bibfield  {author} {\bibinfo {author} {\bibfnamefont {M.}~\bibnamefont
  {Huber}}\ and\ \bibinfo {author} {\bibfnamefont {M.}~\bibnamefont
  {Paw\l{}owski}},\ }\bibfield  {title} {\bibinfo {title} {Weak randomness in
  device-independent quantum key distribution and the advantage of using
  high-dimensional entanglement},\ }\href
  {https://doi.org/10.1103/PhysRevA.88.032309} {\bibfield  {journal} {\bibinfo
  {journal} {Phys. Rev. A}\ }\textbf {\bibinfo {volume} {88}},\ \bibinfo
  {pages} {032309} (\bibinfo {year} {2013})}\BibitemShut {NoStop}%
\bibitem [{\citenamefont {Brougham}\ \emph {et~al.}(2013)\citenamefont
  {Brougham}, \citenamefont {Barnett}, \citenamefont {McCusker}, \citenamefont
  {Kwiat},\ and\ \citenamefont {Gauthier}}]{Brougham_2013}%
  \BibitemOpen
  \bibfield  {author} {\bibinfo {author} {\bibfnamefont {T.}~\bibnamefont
  {Brougham}}, \bibinfo {author} {\bibfnamefont {S.~M.}\ \bibnamefont
  {Barnett}}, \bibinfo {author} {\bibfnamefont {K.~T.}\ \bibnamefont
  {McCusker}}, \bibinfo {author} {\bibfnamefont {P.~G.}\ \bibnamefont
  {Kwiat}},\ and\ \bibinfo {author} {\bibfnamefont {D.~J.}\ \bibnamefont
  {Gauthier}},\ }\bibfield  {title} {\bibinfo {title} {Security of
  high-dimensional quantum key distribution protocols using franson
  interferometers},\ }\href {https://doi.org/10.1088/0953-4075/46/10/104010}
  {\bibfield  {journal} {\bibinfo  {journal} {Journal of Physics B: Atomic,
  Molecular and Optical Physics}\ }\textbf {\bibinfo {volume} {46}},\ \bibinfo
  {pages} {104010} (\bibinfo {year} {2013})}\BibitemShut {NoStop}%
\bibitem [{\citenamefont {Mirhosseini}\ \emph {et~al.}(2015)\citenamefont
  {Mirhosseini}, \citenamefont {Maga{\~{n}}a-Loaiza}, \citenamefont
  {O'Sullivan}, \citenamefont {Rodenburg}, \citenamefont {Malik}, \citenamefont
  {Lavery}, \citenamefont {Padgett}, \citenamefont {Gauthier},\ and\
  \citenamefont {Boyd}}]{Mirhosseini_2015}%
  \BibitemOpen
  \bibfield  {author} {\bibinfo {author} {\bibfnamefont {M.}~\bibnamefont
  {Mirhosseini}}, \bibinfo {author} {\bibfnamefont {O.~S.}\ \bibnamefont
  {Maga{\~{n}}a-Loaiza}}, \bibinfo {author} {\bibfnamefont {M.~N.}\
  \bibnamefont {O'Sullivan}}, \bibinfo {author} {\bibfnamefont
  {B.}~\bibnamefont {Rodenburg}}, \bibinfo {author} {\bibfnamefont
  {M.}~\bibnamefont {Malik}}, \bibinfo {author} {\bibfnamefont {M.~P.~J.}\
  \bibnamefont {Lavery}}, \bibinfo {author} {\bibfnamefont {M.~J.}\
  \bibnamefont {Padgett}}, \bibinfo {author} {\bibfnamefont {D.~J.}\
  \bibnamefont {Gauthier}},\ and\ \bibinfo {author} {\bibfnamefont {R.~W.}\
  \bibnamefont {Boyd}},\ }\bibfield  {title} {\bibinfo {title}
  {High-dimensional quantum cryptography with twisted light},\ }\href
  {https://doi.org/10.1088/1367-2630/17/3/033033} {\bibfield  {journal}
  {\bibinfo  {journal} {New Journal of Physics}\ }\textbf {\bibinfo {volume}
  {17}},\ \bibinfo {pages} {033033} (\bibinfo {year} {2015})}\BibitemShut
  {NoStop}%
\bibitem [{\citenamefont {Ding}\ \emph {et~al.}(2017)\citenamefont {Ding},
  \citenamefont {Bacco}, \citenamefont {Dalgaard}, \citenamefont {Cai},
  \citenamefont {Zhou}, \citenamefont {Rottwitt},\ and\ \citenamefont
  {Oxenl{\o}we}}]{Ding2017}%
  \BibitemOpen
  \bibfield  {author} {\bibinfo {author} {\bibfnamefont {Y.}~\bibnamefont
  {Ding}}, \bibinfo {author} {\bibfnamefont {D.}~\bibnamefont {Bacco}},
  \bibinfo {author} {\bibfnamefont {K.}~\bibnamefont {Dalgaard}}, \bibinfo
  {author} {\bibfnamefont {X.}~\bibnamefont {Cai}}, \bibinfo {author}
  {\bibfnamefont {X.}~\bibnamefont {Zhou}}, \bibinfo {author} {\bibfnamefont
  {K.}~\bibnamefont {Rottwitt}},\ and\ \bibinfo {author} {\bibfnamefont
  {L.~K.}\ \bibnamefont {Oxenl{\o}we}},\ }\bibfield  {title} {\bibinfo {title}
  {High-dimensional quantum key distribution based on multicore fiber using
  silicon photonic integrated circuits},\ }\href
  {https://doi.org/10.1038/s41534-017-0026-2} {\bibfield  {journal} {\bibinfo
  {journal} {npj Quantum Information}\ }\textbf {\bibinfo {volume} {3}},\
  \bibinfo {pages} {25} (\bibinfo {year} {2017})}\BibitemShut {NoStop}%
\bibitem [{\citenamefont {Mower}\ \emph {et~al.}(2013)\citenamefont {Mower},
  \citenamefont {Zhang}, \citenamefont {Desjardins}, \citenamefont {Lee},
  \citenamefont {Shapiro},\ and\ \citenamefont {Englund}}]{PhysRevA.87.062322}%
  \BibitemOpen
  \bibfield  {author} {\bibinfo {author} {\bibfnamefont {J.}~\bibnamefont
  {Mower}}, \bibinfo {author} {\bibfnamefont {Z.}~\bibnamefont {Zhang}},
  \bibinfo {author} {\bibfnamefont {P.}~\bibnamefont {Desjardins}}, \bibinfo
  {author} {\bibfnamefont {C.}~\bibnamefont {Lee}}, \bibinfo {author}
  {\bibfnamefont {J.~H.}\ \bibnamefont {Shapiro}},\ and\ \bibinfo {author}
  {\bibfnamefont {D.}~\bibnamefont {Englund}},\ }\bibfield  {title} {\bibinfo
  {title} {High-dimensional quantum key distribution using dispersive optics},\
  }\href {https://doi.org/10.1103/PhysRevA.87.062322} {\bibfield  {journal}
  {\bibinfo  {journal} {Phys. Rev. A}\ }\textbf {\bibinfo {volume} {87}},\
  \bibinfo {pages} {062322} (\bibinfo {year} {2013})}\BibitemShut {NoStop}%
\bibitem [{\citenamefont {Lee}\ \emph {et~al.}(2015)\citenamefont {Lee},
  \citenamefont {Mower}, \citenamefont {Zhang}, \citenamefont {Shapiro},\ and\
  \citenamefont {Englund}}]{Lee2015}%
  \BibitemOpen
  \bibfield  {author} {\bibinfo {author} {\bibfnamefont {C.}~\bibnamefont
  {Lee}}, \bibinfo {author} {\bibfnamefont {J.}~\bibnamefont {Mower}}, \bibinfo
  {author} {\bibfnamefont {Z.}~\bibnamefont {Zhang}}, \bibinfo {author}
  {\bibfnamefont {J.~H.}\ \bibnamefont {Shapiro}},\ and\ \bibinfo {author}
  {\bibfnamefont {D.}~\bibnamefont {Englund}},\ }\bibfield  {title} {\bibinfo
  {title} {Finite-key analysis of high-dimensional time-energy
  entanglement-based quantum key distribution},\ }\href
  {https://doi.org/10.1007/s11128-014-0904-x} {\bibfield  {journal} {\bibinfo
  {journal} {Quantum Information Processing}\ }\textbf {\bibinfo {volume}
  {14}},\ \bibinfo {pages} {1005} (\bibinfo {year} {2015})}\BibitemShut
  {NoStop}%
\bibitem [{\citenamefont {Bunandar}\ \emph {et~al.}(2015)\citenamefont
  {Bunandar}, \citenamefont {Zhang}, \citenamefont {Shapiro},\ and\
  \citenamefont {Englund}}]{PhysRevA.91.022336}%
  \BibitemOpen
  \bibfield  {author} {\bibinfo {author} {\bibfnamefont {D.}~\bibnamefont
  {Bunandar}}, \bibinfo {author} {\bibfnamefont {Z.}~\bibnamefont {Zhang}},
  \bibinfo {author} {\bibfnamefont {J.~H.}\ \bibnamefont {Shapiro}},\ and\
  \bibinfo {author} {\bibfnamefont {D.~R.}\ \bibnamefont {Englund}},\
  }\bibfield  {title} {\bibinfo {title} {Practical high-dimensional quantum key
  distribution with decoy states},\ }\href
  {https://doi.org/10.1103/PhysRevA.91.022336} {\bibfield  {journal} {\bibinfo
  {journal} {Phys. Rev. A}\ }\textbf {\bibinfo {volume} {91}},\ \bibinfo
  {pages} {022336} (\bibinfo {year} {2015})}\BibitemShut {NoStop}%
\bibitem [{\citenamefont {{Lee}}\ \emph {et~al.}(2016)\citenamefont {{Lee}},
  \citenamefont {{Bunandar}}, \citenamefont {{Zhang}}, \citenamefont
  {{Steinbrecher}}, \citenamefont {{Ben Dixon}}, \citenamefont {{Wong}},
  \citenamefont {{Shapiro}}, \citenamefont {{Hamilton}},\ and\ \citenamefont
  {{Englund}}}]{2016arXiv161101139L}%
  \BibitemOpen
  \bibfield  {author} {\bibinfo {author} {\bibfnamefont {C.}~\bibnamefont
  {{Lee}}}, \bibinfo {author} {\bibfnamefont {D.}~\bibnamefont {{Bunandar}}},
  \bibinfo {author} {\bibfnamefont {Z.}~\bibnamefont {{Zhang}}}, \bibinfo
  {author} {\bibfnamefont {G.~R.}\ \bibnamefont {{Steinbrecher}}}, \bibinfo
  {author} {\bibfnamefont {P.}~\bibnamefont {{Ben Dixon}}}, \bibinfo {author}
  {\bibfnamefont {F.~N.~C.}\ \bibnamefont {{Wong}}}, \bibinfo {author}
  {\bibfnamefont {J.~H.}\ \bibnamefont {{Shapiro}}}, \bibinfo {author}
  {\bibfnamefont {S.~A.}\ \bibnamefont {{Hamilton}}},\ and\ \bibinfo {author}
  {\bibfnamefont {D.}~\bibnamefont {{Englund}}},\ }\bibfield  {title} {\bibinfo
  {title} {{High-rate field demonstration of large-alphabet quantum key
  distribution}},\ }\href@noop {} {\bibfield  {journal} {\bibinfo  {journal}
  {arXiv e-prints}\ ,\ \bibinfo {eid} {arXiv:1611.01139}} (\bibinfo {year}
  {2016})},\ \Eprint {https://arxiv.org/abs/1611.01139} {arXiv:1611.01139
  [quant-ph]} \BibitemShut {NoStop}%
\bibitem [{\citenamefont {Nikolopoulos}\ and\ \citenamefont
  {Alber}(2005)}]{nik:alb:05}%
  \BibitemOpen
  \bibfield  {author} {\bibinfo {author} {\bibfnamefont {G.~M.}\ \bibnamefont
  {Nikolopoulos}}\ and\ \bibinfo {author} {\bibfnamefont {G.}~\bibnamefont
  {Alber}},\ }\bibfield  {title} {\bibinfo {title} {Security bound of two-basis
  quantum-key-distribution protocols using qudits},\ }\href
  {https://doi.org/10.1103/PhysRevA.72.032320} {\bibfield  {journal} {\bibinfo
  {journal} {Phys. Rev. A}\ }\textbf {\bibinfo {volume} {72}},\ \bibinfo
  {pages} {032320} (\bibinfo {year} {2005})}\BibitemShut {NoStop}%
\bibitem [{\citenamefont {Nikolopoulos}\ \emph {et~al.}(2006)\citenamefont
  {Nikolopoulos}, \citenamefont {Ranade},\ and\ \citenamefont
  {Alber}}]{nik:ran:alb:06}%
  \BibitemOpen
  \bibfield  {author} {\bibinfo {author} {\bibfnamefont {G.~M.}\ \bibnamefont
  {Nikolopoulos}}, \bibinfo {author} {\bibfnamefont {K.~S.}\ \bibnamefont
  {Ranade}},\ and\ \bibinfo {author} {\bibfnamefont {G.}~\bibnamefont
  {Alber}},\ }\bibfield  {title} {\bibinfo {title} {Error tolerance of
  two-basis quantum-key-distribution protocols using qudits and two-way
  classical communication},\ }\href
  {https://doi.org/10.1103/PhysRevA.73.032325} {\bibfield  {journal} {\bibinfo
  {journal} {Phys. Rev. A}\ }\textbf {\bibinfo {volume} {73}},\ \bibinfo
  {pages} {032325} (\bibinfo {year} {2006})}\BibitemShut {NoStop}%
\bibitem [{\citenamefont {Vlachou}\ \emph {et~al.}(2018)\citenamefont
  {Vlachou}, \citenamefont {Krawec}, \citenamefont {Mateus}, \citenamefont
  {Paunkovi\'c},\ and\ \citenamefont {Souto}}]{vla:kra:mat:pau:sou:18}%
  \BibitemOpen
  \bibfield  {author} {\bibinfo {author} {\bibfnamefont {C.}~\bibnamefont
  {Vlachou}}, \bibinfo {author} {\bibfnamefont {W.}~\bibnamefont {Krawec}},
  \bibinfo {author} {\bibfnamefont {P.}~\bibnamefont {Mateus}}, \bibinfo
  {author} {\bibfnamefont {N.}~\bibnamefont {Paunkovi\'c}},\ and\ \bibinfo
  {author} {\bibfnamefont {A.}~\bibnamefont {Souto}},\ }\bibfield  {title}
  {\bibinfo {title} {Quantum key distribution with quantum walks},\ }\href
  {https://doi.org/10.1007/s11128-018-2055-y} {\bibfield  {journal} {\bibinfo
  {journal} {Quantum Inf Process}\ }\textbf {\bibinfo {volume} {17}},\ \bibinfo
  {pages} {288} (\bibinfo {year} {2018})}\BibitemShut {NoStop}%
\bibitem [{\citenamefont {Chau}(2015)}]{cha:15}%
  \BibitemOpen
  \bibfield  {author} {\bibinfo {author} {\bibfnamefont {H.~F.}\ \bibnamefont
  {Chau}},\ }\bibfield  {title} {\bibinfo {title} {Quantum key distribution
  using qudits that each encode one bit of raw key},\ }\href
  {https://doi.org/10.1103/PhysRevA.92.062324} {\bibfield  {journal} {\bibinfo
  {journal} {Phys. Rev. A}\ }\textbf {\bibinfo {volume} {92}},\ \bibinfo
  {pages} {062324} (\bibinfo {year} {2015})}\BibitemShut {NoStop}%
\bibitem [{\citenamefont {Chau}\ \emph {et~al.}(2017)\citenamefont {Chau},
  \citenamefont {Wang},\ and\ \citenamefont {Wong}}]{Chau2}%
  \BibitemOpen
  \bibfield  {author} {\bibinfo {author} {\bibfnamefont {H.~F.}\ \bibnamefont
  {Chau}}, \bibinfo {author} {\bibfnamefont {Q.}~\bibnamefont {Wang}},\ and\
  \bibinfo {author} {\bibfnamefont {C.}~\bibnamefont {Wong}},\ }\bibfield
  {title} {\bibinfo {title} {Experimentally feasible quantum-key-distribution
  scheme using qubit-like qudits and its comparison with existing qubit- and
  qudit-based protocols},\ }\href {https://doi.org/10.1103/PhysRevA.95.022311}
  {\bibfield  {journal} {\bibinfo  {journal} {Phys. Rev. A}\ }\textbf {\bibinfo
  {volume} {95}},\ \bibinfo {pages} {022311} (\bibinfo {year}
  {2017})}\BibitemShut {NoStop}%
\bibitem [{\citenamefont {Wang}\ \emph {et~al.}(2018)\citenamefont {Wang},
  \citenamefont {Yin}, \citenamefont {Chau}, \citenamefont {Chen},
  \citenamefont {Wang}, \citenamefont {Guo},\ and\ \citenamefont
  {Han}}]{Chau3}%
  \BibitemOpen
  \bibfield  {author} {\bibinfo {author} {\bibfnamefont {S.}~\bibnamefont
  {Wang}}, \bibinfo {author} {\bibfnamefont {Z.-Q.}\ \bibnamefont {Yin}},
  \bibinfo {author} {\bibfnamefont {H.~F.}\ \bibnamefont {Chau}}, \bibinfo
  {author} {\bibfnamefont {W.}~\bibnamefont {Chen}}, \bibinfo {author}
  {\bibfnamefont {C.}~\bibnamefont {Wang}}, \bibinfo {author} {\bibfnamefont
  {G.-C.}\ \bibnamefont {Guo}},\ and\ \bibinfo {author} {\bibfnamefont {Z.-F.}\
  \bibnamefont {Han}},\ }\bibfield  {title} {\bibinfo {title}
  {Proof-of-principle experimental realization of a qubit-like qudit-based
  quantum key distribution scheme},\ }\href
  {https://doi.org/10.1088/2058-9565/aaace4} {\bibfield  {journal} {\bibinfo
  {journal} {Quantum Science and Technology}\ }\textbf {\bibinfo {volume}
  {3}},\ \bibinfo {pages} {025006} (\bibinfo {year} {2018})}\BibitemShut
  {NoStop}%
\bibitem [{\citenamefont {Bechmann-Pasquinucci}\ and\ \citenamefont
  {Tittel}(2000)}]{bec:tit:00}%
  \BibitemOpen
  \bibfield  {author} {\bibinfo {author} {\bibfnamefont {H.}~\bibnamefont
  {Bechmann-Pasquinucci}}\ and\ \bibinfo {author} {\bibfnamefont
  {W.}~\bibnamefont {Tittel}},\ }\bibfield  {title} {\bibinfo {title} {Quantum
  cryptography using larger alphabets},\ }\href
  {https://doi.org/10.1103/PhysRevA.61.062308} {\bibfield  {journal} {\bibinfo
  {journal} {Phys. Rev. A}\ }\textbf {\bibinfo {volume} {61}},\ \bibinfo
  {pages} {062308} (\bibinfo {year} {2000})}\BibitemShut {NoStop}%
\bibitem [{\citenamefont {Cerf}\ \emph {et~al.}(2002)\citenamefont {Cerf},
  \citenamefont {Bourennane}, \citenamefont {Karlsson},\ and\ \citenamefont
  {Gisin}}]{cer:bou:kar:gis:02}%
  \BibitemOpen
  \bibfield  {author} {\bibinfo {author} {\bibfnamefont {N.~J.}\ \bibnamefont
  {Cerf}}, \bibinfo {author} {\bibfnamefont {M.}~\bibnamefont {Bourennane}},
  \bibinfo {author} {\bibfnamefont {A.}~\bibnamefont {Karlsson}},\ and\
  \bibinfo {author} {\bibfnamefont {N.}~\bibnamefont {Gisin}},\ }\bibfield
  {title} {\bibinfo {title} {Security of quantum key distribution using
  $\mathit{d}$-level systems},\ }\href
  {https://doi.org/10.1103/PhysRevLett.88.127902} {\bibfield  {journal}
  {\bibinfo  {journal} {Phys. Rev. Lett.}\ }\textbf {\bibinfo {volume} {88}},\
  \bibinfo {pages} {127902} (\bibinfo {year} {2002})}\BibitemShut {NoStop}%
\bibitem [{\citenamefont {Gröblacher}\ \emph {et~al.}(2006)\citenamefont
  {Gröblacher}, \citenamefont {Jennewein}, \citenamefont {Vaziri},
  \citenamefont {Weihs},\ and\ \citenamefont {Zeilinger}}]{Gr_blacher_2006}%
  \BibitemOpen
  \bibfield  {author} {\bibinfo {author} {\bibfnamefont {S.}~\bibnamefont
  {Gröblacher}}, \bibinfo {author} {\bibfnamefont {T.}~\bibnamefont
  {Jennewein}}, \bibinfo {author} {\bibfnamefont {A.}~\bibnamefont {Vaziri}},
  \bibinfo {author} {\bibfnamefont {G.}~\bibnamefont {Weihs}},\ and\ \bibinfo
  {author} {\bibfnamefont {A.}~\bibnamefont {Zeilinger}},\ }\bibfield  {title}
  {\bibinfo {title} {Experimental quantum cryptography with qutrits},\ }\href
  {https://doi.org/10.1088/1367-2630/8/5/075} {\bibfield  {journal} {\bibinfo
  {journal} {New Journal of Physics}\ }\textbf {\bibinfo {volume} {8}},\
  \bibinfo {pages} {75} (\bibinfo {year} {2006})}\BibitemShut {NoStop}%
\bibitem [{\citenamefont {Sit}\ \emph {et~al.}(2017)\citenamefont {Sit},
  \citenamefont {Bouchard}, \citenamefont {Fickler}, \citenamefont
  {Gagnon-Bischoff}, \citenamefont {Larocque}, \citenamefont {Heshami},
  \citenamefont {Elser}, \citenamefont {Peuntinger}, \citenamefont
  {G\"{u}nthner}, \citenamefont {Heim}, \citenamefont {Marquardt},
  \citenamefont {Leuchs}, \citenamefont {Boyd},\ and\ \citenamefont
  {Karimi}}]{Sit:17}%
  \BibitemOpen
  \bibfield  {author} {\bibinfo {author} {\bibfnamefont {A.}~\bibnamefont
  {Sit}}, \bibinfo {author} {\bibfnamefont {F.}~\bibnamefont {Bouchard}},
  \bibinfo {author} {\bibfnamefont {R.}~\bibnamefont {Fickler}}, \bibinfo
  {author} {\bibfnamefont {J.}~\bibnamefont {Gagnon-Bischoff}}, \bibinfo
  {author} {\bibfnamefont {H.}~\bibnamefont {Larocque}}, \bibinfo {author}
  {\bibfnamefont {K.}~\bibnamefont {Heshami}}, \bibinfo {author} {\bibfnamefont
  {D.}~\bibnamefont {Elser}}, \bibinfo {author} {\bibfnamefont
  {C.}~\bibnamefont {Peuntinger}}, \bibinfo {author} {\bibfnamefont
  {K.}~\bibnamefont {G\"{u}nthner}}, \bibinfo {author} {\bibfnamefont
  {B.}~\bibnamefont {Heim}}, \bibinfo {author} {\bibfnamefont {C.}~\bibnamefont
  {Marquardt}}, \bibinfo {author} {\bibfnamefont {G.}~\bibnamefont {Leuchs}},
  \bibinfo {author} {\bibfnamefont {R.~W.}\ \bibnamefont {Boyd}},\ and\
  \bibinfo {author} {\bibfnamefont {E.}~\bibnamefont {Karimi}},\ }\bibfield
  {title} {\bibinfo {title} {High-dimensional intracity quantum cryptography
  with structured photons},\ }\href {https://doi.org/10.1364/OPTICA.4.001006}
  {\bibfield  {journal} {\bibinfo  {journal} {Optica}\ }\textbf {\bibinfo
  {volume} {4}},\ \bibinfo {pages} {1006} (\bibinfo {year} {2017})}\BibitemShut
  {NoStop}%
\bibitem [{\citenamefont {Fickler}\ \emph {et~al.}(2020)\citenamefont
  {Fickler}, \citenamefont {Bouchard}, \citenamefont {Giese}, \citenamefont
  {Grillo}, \citenamefont {Leuchs},\ and\ \citenamefont
  {Karimi}}]{Fickler_2020}%
  \BibitemOpen
  \bibfield  {author} {\bibinfo {author} {\bibfnamefont {R.}~\bibnamefont
  {Fickler}}, \bibinfo {author} {\bibfnamefont {F.}~\bibnamefont {Bouchard}},
  \bibinfo {author} {\bibfnamefont {E.}~\bibnamefont {Giese}}, \bibinfo
  {author} {\bibfnamefont {V.}~\bibnamefont {Grillo}}, \bibinfo {author}
  {\bibfnamefont {G.}~\bibnamefont {Leuchs}},\ and\ \bibinfo {author}
  {\bibfnamefont {E.}~\bibnamefont {Karimi}},\ }\bibfield  {title} {\bibinfo
  {title} {Full-field mode sorter using two optimized phase transformations for
  high-dimensional quantum cryptography},\ }\href
  {https://doi.org/10.1088/2040-8986/ab6303} {\bibfield  {journal} {\bibinfo
  {journal} {Journal of Optics}\ }\textbf {\bibinfo {volume} {22}},\ \bibinfo
  {pages} {024001} (\bibinfo {year} {2020})}\BibitemShut {NoStop}%
\bibitem [{\citenamefont {{Vagniluca}}\ \emph {et~al.}(2020)\citenamefont
  {{Vagniluca}}, \citenamefont {{Da Lio}}, \citenamefont {{Rusca}},
  \citenamefont {{Cozzolino}}, \citenamefont {{Ding}}, \citenamefont
  {{Zbinden}}, \citenamefont {{Zavatta}}, \citenamefont {{Oxenl{\o}we}},\ and\
  \citenamefont {{Bacco}}}]{2020arXiv200403498V}%
  \BibitemOpen
  \bibfield  {author} {\bibinfo {author} {\bibfnamefont {I.}~\bibnamefont
  {{Vagniluca}}}, \bibinfo {author} {\bibfnamefont {B.}~\bibnamefont {{Da
  Lio}}}, \bibinfo {author} {\bibfnamefont {D.}~\bibnamefont {{Rusca}}},
  \bibinfo {author} {\bibfnamefont {D.}~\bibnamefont {{Cozzolino}}}, \bibinfo
  {author} {\bibfnamefont {Y.}~\bibnamefont {{Ding}}}, \bibinfo {author}
  {\bibfnamefont {H.}~\bibnamefont {{Zbinden}}}, \bibinfo {author}
  {\bibfnamefont {A.}~\bibnamefont {{Zavatta}}}, \bibinfo {author}
  {\bibfnamefont {L.~K.}\ \bibnamefont {{Oxenl{\o}we}}},\ and\ \bibinfo
  {author} {\bibfnamefont {D.}~\bibnamefont {{Bacco}}},\ }\bibfield  {title}
  {\bibinfo {title} {{Efficient time-bin encoding for practical
  high-dimensional quantum key distribution}},\ }\href@noop {} {\bibfield
  {journal} {\bibinfo  {journal} {arXiv e-prints}\ ,\ \bibinfo {eid}
  {arXiv:2004.03498}} (\bibinfo {year} {2020})},\ \Eprint
  {https://arxiv.org/abs/2004.03498} {arXiv:2004.03498 [quant-ph]} \BibitemShut
  {NoStop}%
\bibitem [{\citenamefont {Etcheverry}\ \emph {et~al.}(2013)\citenamefont
  {Etcheverry}, \citenamefont {Ca{\~{n}}as}, \citenamefont {G{\'o}mez},
  \citenamefont {Nogueira}, \citenamefont {Saavedra}, \citenamefont {Xavier},\
  and\ \citenamefont {Lima}}]{Etcheverry2013}%
  \BibitemOpen
  \bibfield  {author} {\bibinfo {author} {\bibfnamefont {S.}~\bibnamefont
  {Etcheverry}}, \bibinfo {author} {\bibfnamefont {G.}~\bibnamefont
  {Ca{\~{n}}as}}, \bibinfo {author} {\bibfnamefont {E.~S.}\ \bibnamefont
  {G{\'o}mez}}, \bibinfo {author} {\bibfnamefont {W.~A.~T.}\ \bibnamefont
  {Nogueira}}, \bibinfo {author} {\bibfnamefont {C.}~\bibnamefont {Saavedra}},
  \bibinfo {author} {\bibfnamefont {G.~B.}\ \bibnamefont {Xavier}},\ and\
  \bibinfo {author} {\bibfnamefont {G.}~\bibnamefont {Lima}},\ }\bibfield
  {title} {\bibinfo {title} {Quantum key distribution session with
  16-dimensional photonic states},\ }\href {https://doi.org/10.1038/srep02316}
  {\bibfield  {journal} {\bibinfo  {journal} {Scientific Reports}\ }\textbf
  {\bibinfo {volume} {3}},\ \bibinfo {pages} {2316} (\bibinfo {year}
  {2013})}\BibitemShut {NoStop}%
\bibitem [{\citenamefont {Ca\~nas}\ \emph {et~al.}(2017)\citenamefont
  {Ca\~nas}, \citenamefont {Vera}, \citenamefont {Cari\~ne}, \citenamefont
  {Gonz\'alez}, \citenamefont {Cardenas}, \citenamefont {Connolly},
  \citenamefont {Przysiezna}, \citenamefont {G\'omez}, \citenamefont
  {Figueroa}, \citenamefont {Vallone}, \citenamefont {Villoresi}, \citenamefont
  {da~Silva}, \citenamefont {Xavier},\ and\ \citenamefont
  {Lima}}]{PhysRevA.96.022317}%
  \BibitemOpen
  \bibfield  {author} {\bibinfo {author} {\bibfnamefont {G.}~\bibnamefont
  {Ca\~nas}}, \bibinfo {author} {\bibfnamefont {N.}~\bibnamefont {Vera}},
  \bibinfo {author} {\bibfnamefont {J.}~\bibnamefont {Cari\~ne}}, \bibinfo
  {author} {\bibfnamefont {P.}~\bibnamefont {Gonz\'alez}}, \bibinfo {author}
  {\bibfnamefont {J.}~\bibnamefont {Cardenas}}, \bibinfo {author}
  {\bibfnamefont {P.~W.~R.}\ \bibnamefont {Connolly}}, \bibinfo {author}
  {\bibfnamefont {A.}~\bibnamefont {Przysiezna}}, \bibinfo {author}
  {\bibfnamefont {E.~S.}\ \bibnamefont {G\'omez}}, \bibinfo {author}
  {\bibfnamefont {M.}~\bibnamefont {Figueroa}}, \bibinfo {author}
  {\bibfnamefont {G.}~\bibnamefont {Vallone}}, \bibinfo {author} {\bibfnamefont
  {P.}~\bibnamefont {Villoresi}}, \bibinfo {author} {\bibfnamefont {T.~F.}\
  \bibnamefont {da~Silva}}, \bibinfo {author} {\bibfnamefont {G.~B.}\
  \bibnamefont {Xavier}},\ and\ \bibinfo {author} {\bibfnamefont
  {G.}~\bibnamefont {Lima}},\ }\bibfield  {title} {\bibinfo {title}
  {High-dimensional decoy-state quantum key distribution over multicore
  telecommunication fibers},\ }\href
  {https://doi.org/10.1103/PhysRevA.96.022317} {\bibfield  {journal} {\bibinfo
  {journal} {Phys. Rev. A}\ }\textbf {\bibinfo {volume} {96}},\ \bibinfo
  {pages} {022317} (\bibinfo {year} {2017})}\BibitemShut {NoStop}%
\bibitem [{\citenamefont {Xu}\ \emph {et~al.}(2019)\citenamefont {Xu},
  \citenamefont {Ma}, \citenamefont {Zhang}, \citenamefont {Lo},\ and\
  \citenamefont {Pan}}]{xu2019secure}%
  \BibitemOpen
  \bibfield  {author} {\bibinfo {author} {\bibfnamefont {F.}~\bibnamefont
  {Xu}}, \bibinfo {author} {\bibfnamefont {X.}~\bibnamefont {Ma}}, \bibinfo
  {author} {\bibfnamefont {Q.}~\bibnamefont {Zhang}}, \bibinfo {author}
  {\bibfnamefont {H.-K.}\ \bibnamefont {Lo}},\ and\ \bibinfo {author}
  {\bibfnamefont {J.-W.}\ \bibnamefont {Pan}},\ }\href@noop {} {\bibinfo
  {title} {Secure quantum key distribution with realistic devices}} (\bibinfo
  {year} {2019}),\ \Eprint {https://arxiv.org/abs/1903.09051} {arXiv:1903.09051
  [quant-ph]} \BibitemShut {NoStop}%
\bibitem [{\citenamefont {Boaron}\ \emph {et~al.}(2018)\citenamefont {Boaron},
  \citenamefont {Korzh}, \citenamefont {Houlmann}, \citenamefont {Boso},
  \citenamefont {Rusca}, \citenamefont {Gray}, \citenamefont {Li},
  \citenamefont {Nolan}, \citenamefont {Martin},\ and\ \citenamefont
  {Zbinden}}]{timebinqkd}%
  \BibitemOpen
  \bibfield  {author} {\bibinfo {author} {\bibfnamefont {A.}~\bibnamefont
  {Boaron}}, \bibinfo {author} {\bibfnamefont {B.}~\bibnamefont {Korzh}},
  \bibinfo {author} {\bibfnamefont {R.}~\bibnamefont {Houlmann}}, \bibinfo
  {author} {\bibfnamefont {G.}~\bibnamefont {Boso}}, \bibinfo {author}
  {\bibfnamefont {D.}~\bibnamefont {Rusca}}, \bibinfo {author} {\bibfnamefont
  {S.}~\bibnamefont {Gray}}, \bibinfo {author} {\bibfnamefont {M.-J.}\
  \bibnamefont {Li}}, \bibinfo {author} {\bibfnamefont {D.}~\bibnamefont
  {Nolan}}, \bibinfo {author} {\bibfnamefont {A.}~\bibnamefont {Martin}},\ and\
  \bibinfo {author} {\bibfnamefont {H.}~\bibnamefont {Zbinden}},\ }\bibfield
  {title} {\bibinfo {title} {Simple 2.5 ghz time-bin quantum key
  distribution},\ }\href {https://doi.org/10.1063/1.5027030} {\bibfield
  {journal} {\bibinfo  {journal} {Applied Physics Letters}\ }\textbf {\bibinfo
  {volume} {112}},\ \bibinfo {pages} {171108} (\bibinfo {year} {2018})},\
  \Eprint {https://arxiv.org/abs/https://doi.org/10.1063/1.5027030}
  {https://doi.org/10.1063/1.5027030} \BibitemShut {NoStop}%
\bibitem [{\citenamefont {Ecker}\ \emph {et~al.}(2019)\citenamefont {Ecker},
  \citenamefont {Bouchard}, \citenamefont {Bulla}, \citenamefont {Brandt},
  \citenamefont {Kohout}, \citenamefont {Steinlechner}, \citenamefont
  {Fickler}, \citenamefont {Malik}, \citenamefont {Guryanova}, \citenamefont
  {Ursin},\ and\ \citenamefont {et~al.}}]{Ecker_2019}%
  \BibitemOpen
  \bibfield  {author} {\bibinfo {author} {\bibfnamefont {S.}~\bibnamefont
  {Ecker}}, \bibinfo {author} {\bibfnamefont {F.}~\bibnamefont {Bouchard}},
  \bibinfo {author} {\bibfnamefont {L.}~\bibnamefont {Bulla}}, \bibinfo
  {author} {\bibfnamefont {F.}~\bibnamefont {Brandt}}, \bibinfo {author}
  {\bibfnamefont {O.}~\bibnamefont {Kohout}}, \bibinfo {author} {\bibfnamefont
  {F.}~\bibnamefont {Steinlechner}}, \bibinfo {author} {\bibfnamefont
  {R.}~\bibnamefont {Fickler}}, \bibinfo {author} {\bibfnamefont
  {M.}~\bibnamefont {Malik}}, \bibinfo {author} {\bibfnamefont
  {Y.}~\bibnamefont {Guryanova}}, \bibinfo {author} {\bibfnamefont
  {R.}~\bibnamefont {Ursin}},\ and\ \bibinfo {author} {\bibnamefont {et~al.}},\
  }\bibfield  {title} {\bibinfo {title} {Overcoming noise in entanglement
  distribution},\ }\bibfield  {journal} {\bibinfo  {journal} {Physical Review
  X}\ }\textbf {\bibinfo {volume} {9}},\ \href
  {https://doi.org/10.1103/physrevx.9.041042} {10.1103/physrevx.9.041042}
  (\bibinfo {year} {2019})\BibitemShut {NoStop}%
\bibitem [{\citenamefont {König}\ and\ \citenamefont
  {Renner}(2005)}]{deFinetti1}%
  \BibitemOpen
  \bibfield  {author} {\bibinfo {author} {\bibfnamefont {R.}~\bibnamefont
  {König}}\ and\ \bibinfo {author} {\bibfnamefont {R.}~\bibnamefont
  {Renner}},\ }\bibfield  {title} {\bibinfo {title} {A de finetti
  representation for finite symmetric quantum states},\ }\href
  {https://doi.org/10.1063/1.2146188} {\bibfield  {journal} {\bibinfo
  {journal} {Journal of Mathematical Physics}\ }\textbf {\bibinfo {volume}
  {46}},\ \bibinfo {pages} {122108} (\bibinfo {year} {2005})}\BibitemShut
  {NoStop}%
\bibitem [{\citenamefont {Christandl}\ \emph {et~al.}(2009)\citenamefont
  {Christandl}, \citenamefont {K\"onig},\ and\ \citenamefont
  {Renner}}]{deFinetti2}%
  \BibitemOpen
  \bibfield  {author} {\bibinfo {author} {\bibfnamefont {M.}~\bibnamefont
  {Christandl}}, \bibinfo {author} {\bibfnamefont {R.}~\bibnamefont
  {K\"onig}},\ and\ \bibinfo {author} {\bibfnamefont {R.}~\bibnamefont
  {Renner}},\ }\bibfield  {title} {\bibinfo {title} {Postselection technique
  for quantum channels with applications to quantum cryptography},\ }\href
  {https://doi.org/10.1103/PhysRevLett.102.020504} {\bibfield  {journal}
  {\bibinfo  {journal} {Phys. Rev. Lett.}\ }\textbf {\bibinfo {volume} {102}},\
  \bibinfo {pages} {020504} (\bibinfo {year} {2009})}\BibitemShut {NoStop}%
\bibitem [{\citenamefont {{Devetak}}\ and\ \citenamefont
  {{Winter}}(2005)}]{2005RSPSA.461..207D}%
  \BibitemOpen
  \bibfield  {author} {\bibinfo {author} {\bibfnamefont {I.}~\bibnamefont
  {{Devetak}}}\ and\ \bibinfo {author} {\bibfnamefont {A.}~\bibnamefont
  {{Winter}}},\ }\bibfield  {title} {\bibinfo {title} {{Distillation of secret
  key and entanglement from quantum states}},\ }\href
  {https://doi.org/10.1098/rspa.2004.1372} {\bibfield  {journal} {\bibinfo
  {journal} {Proceedings of the Royal Society of London Series A}\ }\textbf
  {\bibinfo {volume} {461}},\ \bibinfo {pages} {207} (\bibinfo {year}
  {2005})},\ \Eprint {https://arxiv.org/abs/quant-ph/0306078}
  {arXiv:quant-ph/0306078 [quant-ph]} \BibitemShut {NoStop}%
\bibitem [{\citenamefont {Renner}(2005)}]{RennerThesis}%
  \BibitemOpen
  \bibfield  {author} {\bibinfo {author} {\bibfnamefont {R.}~\bibnamefont
  {Renner}},\ }\href
  {https://doi.org/http://dx.doi.org/10.3929/ethz-a-005115027} {\emph {\bibinfo
  {title} {Security of quantum key distribution}}}\ (\bibinfo  {publisher}
  {Diss., Naturwissenschaften, ETH Z\"urich, Nr. 16242, 2006},\ \bibinfo {year}
  {2005})\BibitemShut {NoStop}%
\bibitem [{\citenamefont {{Maurer}}(1993)}]{Mau93}%
  \BibitemOpen
  \bibfield  {author} {\bibinfo {author} {\bibfnamefont {U.~M.}\ \bibnamefont
  {{Maurer}}},\ }\bibfield  {title} {\bibinfo {title} {Secret key agreement by
  public discussion from common information},\ }\href
  {https://doi.org/10.1109/18.256484} {\bibfield  {journal} {\bibinfo
  {journal} {IEEE Transactions on Information Theory}\ }\textbf {\bibinfo
  {volume} {39}},\ \bibinfo {pages} {733} (\bibinfo {year} {1993})}\BibitemShut
  {NoStop}%
\bibitem [{\citenamefont {{Gottesman}}\ and\ \citenamefont {Lo}(2003)}]{GL03}%
  \BibitemOpen
  \bibfield  {author} {\bibinfo {author} {\bibfnamefont {D.}~\bibnamefont
  {{Gottesman}}}\ and\ \bibinfo {author} {\bibfnamefont {H.-K.}\ \bibnamefont
  {Lo}},\ }\bibfield  {title} {\bibinfo {title} {Proof of security of quantum
  key distribution with two-way classical communications},\ }\href
  {https://doi.org/10.1109/TIT.2002.807289} {\bibfield  {journal} {\bibinfo
  {journal} {IEEE Transactions on Information Theory}\ }\textbf {\bibinfo
  {volume} {49}},\ \bibinfo {pages} {457} (\bibinfo {year} {2003})}\BibitemShut
  {NoStop}%
\bibitem [{\citenamefont {Kraus}\ \emph {et~al.}(2007)\citenamefont {Kraus},
  \citenamefont {Branciard},\ and\ \citenamefont {Renner}}]{KBR07}%
  \BibitemOpen
  \bibfield  {author} {\bibinfo {author} {\bibfnamefont {B.}~\bibnamefont
  {Kraus}}, \bibinfo {author} {\bibfnamefont {C.}~\bibnamefont {Branciard}},\
  and\ \bibinfo {author} {\bibfnamefont {R.}~\bibnamefont {Renner}},\
  }\bibfield  {title} {\bibinfo {title} {Security of quantum-key-distribution
  protocols using two-way classical communication or weak coherent pulses},\
  }\href {https://doi.org/10.1103/PhysRevA.75.012316} {\bibfield  {journal}
  {\bibinfo  {journal} {Phys. Rev. A}\ }\textbf {\bibinfo {volume} {75}},\
  \bibinfo {pages} {012316} (\bibinfo {year} {2007})}\BibitemShut {NoStop}%
\bibitem [{\citenamefont {Bae}\ and\ \citenamefont {Ac\'{\i}n}(2007)}]{BA07}%
  \BibitemOpen
  \bibfield  {author} {\bibinfo {author} {\bibfnamefont {J.}~\bibnamefont
  {Bae}}\ and\ \bibinfo {author} {\bibfnamefont {A.}~\bibnamefont
  {Ac\'{\i}n}},\ }\bibfield  {title} {\bibinfo {title} {Key distillation from
  quantum channels using two-way communication protocols},\ }\href
  {https://doi.org/10.1103/PhysRevA.75.012334} {\bibfield  {journal} {\bibinfo
  {journal} {Phys. Rev. A}\ }\textbf {\bibinfo {volume} {75}},\ \bibinfo
  {pages} {012334} (\bibinfo {year} {2007})}\BibitemShut {NoStop}%
\bibitem [{\citenamefont {{Hu}}\ \emph
  {et~al.}(2020{\natexlab{b}})\citenamefont {{Hu}}, \citenamefont {{Zhang}},
  \citenamefont {{Guo}}, \citenamefont {{Wang}}, \citenamefont {{Xing}},
  \citenamefont {{Huang}}, \citenamefont {{Liu}}, \citenamefont {{Huang}},
  \citenamefont {{Li}}, \citenamefont {{Guo}}, \citenamefont {{Gao}},
  \citenamefont {{Pivoluska}},\ and\ \citenamefont
  {{Huber}}}]{2020arXiv201103005H}%
  \BibitemOpen
  \bibfield  {author} {\bibinfo {author} {\bibfnamefont {X.-M.}\ \bibnamefont
  {{Hu}}}, \bibinfo {author} {\bibfnamefont {C.}~\bibnamefont {{Zhang}}},
  \bibinfo {author} {\bibfnamefont {Y.}~\bibnamefont {{Guo}}}, \bibinfo
  {author} {\bibfnamefont {F.-X.}\ \bibnamefont {{Wang}}}, \bibinfo {author}
  {\bibfnamefont {W.-B.}\ \bibnamefont {{Xing}}}, \bibinfo {author}
  {\bibfnamefont {C.-X.}\ \bibnamefont {{Huang}}}, \bibinfo {author}
  {\bibfnamefont {B.-H.}\ \bibnamefont {{Liu}}}, \bibinfo {author}
  {\bibfnamefont {Y.-F.}\ \bibnamefont {{Huang}}}, \bibinfo {author}
  {\bibfnamefont {C.-F.}\ \bibnamefont {{Li}}}, \bibinfo {author}
  {\bibfnamefont {G.-C.}\ \bibnamefont {{Guo}}}, \bibinfo {author}
  {\bibfnamefont {X.}~\bibnamefont {{Gao}}}, \bibinfo {author} {\bibfnamefont
  {M.}~\bibnamefont {{Pivoluska}}},\ and\ \bibinfo {author} {\bibfnamefont
  {M.}~\bibnamefont {{Huber}}},\ }\bibfield  {title} {\bibinfo {title}
  {{Pathways for entanglement based quantum communication in the face of high
  noise}},\ }\href@noop {} {\bibfield  {journal} {\bibinfo  {journal} {arXiv
  e-prints}\ ,\ \bibinfo {eid} {arXiv:2011.03005}} (\bibinfo {year}
  {2020}{\natexlab{b}})},\ \Eprint {https://arxiv.org/abs/2011.03005}
  {arXiv:2011.03005 [quant-ph]} \BibitemShut {NoStop}%
\end{thebibliography}%

\onecolumngrid
\newpage
\begin{center}
    \huge{Appendix}
\end{center}
\appendix

\section
{Key rates of the subspace QKD protocol}\label{app:RateProof}
Here we prove the key rate expression for the Subspace-QKD Protocol \ref{prot}.
The key rate can be computed using the key rates of each subspace, as stated in the following theorem.
\begin{theorem}
The asymptotic key rate of the subspace-QKD protocol, Protocol \ref{prot},  is given by
\begin{align}
   K_{TOT}\geq \sum_{m=0}^{\ell -1}p(M=m) \de{H(X'|E)_{\tilde{\rho}^m}-H(X'|Y')_{\tilde{\rho}^m}},
\end{align}
where the conditional entropies are evaluated on the states $ \tilde{\rho}^m_{X'Y'E}$ given by
\begin{align}
    \tilde{\rho}^m_{X'Y'E}=(\mathcal{E}^{\mathcal{M}}_{X'Y'\leftarrow AB}\otimes id_E)(\ket{\tilde{\psi}_{ABE} ^{m}})
\end{align}
and $\ket{\tilde{\psi}_{ABE} ^{m}}$ is the purification of the state
\begin{align}
   \rho_{AB}^{m}= \frac{\Pi^m_A\otimes \Pi^m_B ( \rho_{AB}) \Pi^m_A\otimes \Pi^m_B}{p(M=m)}
\end{align}
with
\begin{align}
  p(M=m)={\rm Tr} (\Pi^m_A\otimes \Pi^m_B \rho_{AB}).
\end{align}
\end{theorem}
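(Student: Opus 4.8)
The plan is to reduce the global Devetak--Winter rate $K \geq H(X|E_T) - H(X|Y)$ to the claimed per-subspace expression by exploiting that the subspace label $M$ is revealed publicly in Step \ref{step:AD} and therefore enters Eve's total side information $E_T$. First I would organize the classical registers of a retained key round: since Alice's raw outcome is $x = Mk + i$, the outcome factorizes bijectively as $X = (M, X')$ with $X'$ the within-subspace value, and likewise $Y = (M, Y')$ on Bob's side, where the kept rounds are exactly those with matching announced labels. Writing $E_T = (E, M, \dots)$, with $E$ Eve's quantum purifying system and the remaining classical announcements collected in the dots, makes the role of $M$ explicit as public information shared by all three parties.

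Next I would treat the two entropic terms separately. For the secrecy term, because $M$ is a deterministic function of $E_T$ one has $H(X|E_T) = H(M,X'|E_T) = H(X'|E_T)$; then, conditioning on the classical register $M$ and using that for a classical--quantum state conditioning on a classical value averages over its distribution, $H(X'|E_T) = \sum_m p(M=m)\, H(X'|E)_{\tilde{\rho}^m}$. For the reconciliation term, both parties learn the common label, so $H(X|Y) = H(M,X'|M,Y') = H(X'|MY') = \sum_m p(M=m)\, H(X'|Y')_{\tilde{\rho}^m}$, consistent with performing information reconciliation inside each subspace (Step \ref{step:IR-PA}). Subtracting the two sums yields the stated weighted decomposition, with the sifting probability absorbed into $p(M=m)$.

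The crux is to justify that the per-subspace entropies are evaluated on precisely $\tilde{\rho}^m_{X'Y'E}$, i.e. that conditioning on $M=m$ collapses the tripartite state to the purification of the projected state $\rho^m_{AB}$. Here I would observe that announcing the subspace label merely records the coarse-grained outcome of the measurement $\{\Pi^m_A \otimes \Pi^{m'}_B\}$ on $AB$, which is refined by (and hence commutes with) the subsequent fine-grained in-subspace measurement encoded in $\mathcal{E}^{\mathcal{M}}$. Thus the retained rounds arise by first applying $\{\Pi^m_A\otimes\Pi^m_B\}$ and post-selecting on equal labels: conditioned on $M=m$ the normalized post-measurement state of $AB$ is $\rho^m_{AB}$, and since Eve held a purification of $\rho_{AB}$ on the untouched register $E$, her conditional state purifies $\rho^m_{AB}$, giving $\ket{\tilde{\psi}^m_{ABE}}$. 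Applying $\mathcal{E}^{\mathcal{M}}$ then produces $\tilde{\rho}^m_{X'Y'E}$, so that the within-subspace min-entropy bound of Eq. \eqref{eq:Pg} applies to the genuine conditional state.

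The main obstacle I anticipate is the rigorous handling of this conditioning: confirming that revealing $M$ leaks to Eve no more than is already accounted for by including $M$ in $E_T$, and that the public subspace announcement --- which, as in advantage distillation, discloses a function of both parties' outcomes --- is incorporated correctly so the one-way Devetak--Winter bound survives the public subspace-selection step. This is exactly where the techniques of advantage-distillation protocols \cite{Mau93,GL03,KBR07,BA07} enter, and the delicate point is verifying that Eve's optimal strategy is to retain her purification and use the conditional state after learning $M=m$, so that summing the $\ell$ parallel sub-protocols reproduces \eqref{eq:rateSubQKD} without hidden cross-subspace leakage.
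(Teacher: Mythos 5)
Your proposal is correct and follows essentially the same route as the paper's proof: you decompose the conditional entropies over the classical subspace register $M$, use the commutation of the coarse-grained projection $\{\Pi^m_A\otimes\Pi^m_B\}$ with the block-diagonal in-subspace measurements to reorder the subspace selection before the measurement map, and evaluate the per-subspace Devetak--Winter terms on the conditional states, exactly as the paper does. The only (harmless) variation is that you observe Eve's conditional state already purifies $\rho^m_{AB}$ (projecting a pure tripartite state yields a pure state), which gives equality where the paper settles for the one-sided bound that handing Eve the purification $\ket{\tilde{\psi}^m_{ABE}}$ ``only increases her power.''
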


\begin{proof}
The protocol explores multiple subspaces of size $k$, where $d=\ell\cdot k$. Then both $\mathcal{H}_A$ and $\mathcal{H}_B$ can be divided into $\ell$ subspaces of size $k$ as
$\mathcal{H}_A = \mathcal{H}_{A_0}\oplus\dots\oplus
\mathcal{H}_{A_{\ell-1}}$ and $\mathcal{H}_B = \mathcal{H}_{B_0}\oplus\dots\oplus
\mathcal{H}_{B_{\ell-1}}.$

The QKD protocol involves two measurement settings: $\{A_1^x\}_{x=0}^{d-1}$ and $\{B_1^y\}_{y=0}^{d-1}$ denote the computational bases of $\mathcal{H}_A$ and $\mathcal{H}_B$, respectively, and
$\{A_2^x\}_{x=0}^{d-1}, \{B_2^y\}_{y=0}^{d-1}$ are sums of mutually unbiased measurements in subspaces of size $k$. Formally, $A_2^x = UA_1^xU^\dagger$ and $B_2^y=U^*B_1^yU^\top$, where $U = \sum_{m=0}^{\ell-1}\sum_{i,j=0}^{k-1} \omega_k^{ij}\ket{mk+i}\bra{mk+j}$ and
$\omega_k = e^{\frac{2\pi i}{k}}$. Note that the two measurements of the parties are block diagonal with blocks of size $k$. Therefore,
Alice's measurement outcome $x = mk+i$ is interpreted as outcome $i$ in the $m$-th subspace. Bob's measurement outcomes are interpreted analogously.

The subspace-QKD protocol, given in in the main text, can be divided in the following maps
\begin{align}
    \mathcal{E}^{SubspaceQKD}_{K_AK_BE''\leftarrow ABE}(\rho_{ABE})=\mathcal{E}^{IR-PA}_{K_AK_BE''\leftarrow X'Y'E'}\circ\mathcal{E}^{Sub}_{X'Y'E'\leftarrow XYE}\circ ( \mathcal{E}^{\mathcal{M}}_{XY\leftarrow AB}\otimes id_E)(\rho_{ABE}).
\end{align}
where $\mathcal{E}^{\mathcal{M}}_{XY\leftarrow AB}\otimes id_E$ represents the measurements implemented in Step \ref{step:M}, $\mathcal{E}^{Sub}_{X'Y'E'\leftarrow XYE}$ corresponds to the subspace selection implemented in Step \ref{step:AD}, and finally
$\mathcal{E}^{IR-PA}_{K_AK_BE''}$ describes the classical post-processing applied to the raw key consisting of information reconciliation and privacy amplification.

The difference between the subspace-QKD protocol, Protocol \ref{prot}, and a standard high-dimensional QKD protocol is in the subspace selection described in Step~\ref{step:AD}.
The selection of subspaces, Step~\ref{step:AD} in Protocol \ref{prot}, corresponds to Alice and Bob applying a local projection described by  $\DE{\Pi^m}_{m=0}^{\ell-1}$ with elements
\begin{align}
    \Pi^m=\sum_{i=0}^{k-1}\ketbra{mk+i}{mk+i},
\end{align}
and selecting the cases where both get the same outcome. This step resembles the procedure called \emph{advantage distillation} that has been studied in the classical setting \cite{Mau93} as well as in QKD protocols \cite{GL03,KBR07,BA07}. Indeed the subspace selection in Step \ref{step:AD} aims to select the rounds in which higher correlations between Alice and Bob are observed. This is exactly the objective of advantage distillation procedures such as the ones studied in \cite{GL03,KBR07,BA07}. However the procedures studied in \cite{GL03,KBR07,BA07} consist of processing several rounds together (corresponding to a joint quantum operations in several copies of the state) while the procedure given by Step \ref{step:AD} only involve one copy of the state (and corresponds to a single copy filter operation).

Now, note that the map implemented in Step~\ref{step:AD} is a projection into subspaces, and this operation commutes with both measurements of Alice and Bob. Therefore we can, instead, describe the QKD protocol as first Alice and Bob perform the projection and then proceed with the measurements in the resulting state
\begin{align}
    \mathcal{E}^{SubspaceQKD}_{K_AK_BE''\leftarrow ABE}(\rho_{ABE})=\mathcal{E}^{IR-PA}_{K_AK_BE''\leftarrow X'Y'E'}\circ( \mathcal{E}^{\mathcal{M}}_{X'Y'\leftarrow A'B'}\otimes id_E)\circ \mathcal{E}^{Sub}_{A'B'E'\leftarrow ABE}(\rho_{ABE}).
\end{align}

So in order to perform the security analysis we will use this alternative description. Our goal is to determine the state shared by the parties after the action of the map $\mathcal{E}^{Sub}_{A'B'E'\leftarrow ABE}$. Initially, Eve distributes a state $\rho_{AB}$ to Alice and Bob such that she holds a purification of it
\begin{align}
    \rho_{AB}={\rm Tr}_E\ketbra{\psi_{ABE}}{\psi_{ABE}}
\end{align}
Applying the map $\mathcal{E}^{Sub}_{A'B'E'\leftarrow ABE}(\ket{\psi}_{ABE})$ leads to
\begin{align}
     \mathcal{E}^{AD}_{A'B'E'\leftarrow ABE}(\rho_{ABE})=\rho_{A'B'EM}
\end{align}
where the register $M$ records the result of the projection, i.e. $M$ takes value $m$ if $m_A=m_B=m$, and $M=\perp$ otherwise:
\begin{align}\label{eq:stateAD}
\rho_{A'B'EM}=\sum_{m=0}^{\ell -1}p(M=m)\rho_{ABE}^{m}\otimes\ketbra{m}{m}_M + p(M=\perp) \ketbra{\perp \perp}{\perp \perp}_{AB}\otimes \rho_E\otimes\ketbra{\perp}{\perp}_M
\end{align}
where
\begin{align}
   \rho_{ABE}^{m}= \frac{\Pi^m_A\otimes \Pi^m_B\otimes I_E ( \ketbra{\psi_{ABE}}{\psi_{ABE}}) \Pi^m_A\otimes \Pi^m_B\otimes I_E}{p(M=m)}
\end{align}
and $p(M=m)$ is the probability that Alice and Bob get outcome $m$ in the projection, given by
\begin{align}
  p(M=m)={\rm Tr} (\Pi^m_A\otimes \Pi^m_B\otimes I_E \ketbra{\psi_{ABE}}{\psi_{ABE}}).
\end{align}
Now, Alice and Bob will perform measurements on the state \eqref{eq:stateAD} in order to generate a key:
\begin{align}
    \rho_{X'Y'EM}=\mathcal{E}^{\mathcal{M}}_{X'Y'\leftarrow A'B'}\otimes id_{EM}(\rho_{A'B'EM})
\end{align}
The entropy of Alice's outcome after measuring the state ${\rho_{A'B'EM}}$, conditioned on the information available to the eavesdropper is given by
\begin{align}
    H(X'|EM)_{\rho_{X'EM}}&=\sum_{m=0}^{\ell -1}p(M=m) H(X|EM=m)_{\rho_{X'EM}}\\
    &=\sum_{m=0}^{\ell -1}p(M=m) H(X'|E)_{\rho_{X'E}^{m}}\\
    &\geq \sum_{m=0}^{\ell -1}p(M=m) H(X'|E)_{\tilde{\rho}_{X'E}^{m}}
\end{align}
The first equation follows from the properties of conditional von Neumann entropy for cq-states. In the last step we consider the entropy evaluated on the state that results from Alice and Bob applying the measurements to $\tilde{\rho}_{A'B'EM}$, where
\begin{align}\label{eq:purification}
\tilde{\rho}_{A'B'EM}=\sum_{m=0}^{\ell -1}p(M=m)\tilde{\rho}_{ABE}^{m}\otimes\ketbra{m}{m}_M + p(M=\perp) \ketbra{\perp \perp}{\perp \perp}_{AB}\otimes \rho_E\otimes\ketbra{\perp}{\perp}_M
\end{align}
and
\begin{align}
\tilde{\rho}_{ABE}^{m}={\rm Tr}_E\ketbra{\psi^m_{ABE}}{\psi^m_{ABE}}
\end{align}
where $\ket{\psi^m_{ABE}}$ is the purification of $\rho_{AB}^{m}$. Giving Eve the purification of $\rho_{AB}^{m}$ before the measurements only increases her power, which proves the lower bound.
Similarly for the required information to be exchanged for information reconciliation
\begin{align}
    H(X'|Y'M)_{\rho_{X'Y'M}}&=\sum_{m=0}^{\ell -1}p(M=m) H(X'|Y'M=m)_{\rho_{X'Y'M}}\\
    &=\sum_{m=0}^{\ell -1}p(M=m) H(X'|Y')_{\rho_{X'Y'}^{m}}\\
     &=\sum_{m=0}^{\ell -1}p(M=m) H(X'|Y')_{\tilde{\rho}_{X'Y'}^{m}}
\end{align}
and the last step follows from the fact that $\tilde{\rho}_{AB}^{{m}}=\rho_{AB}^{{m}}$.
\end{proof}

\section{Solution of the SDP and the choice of $W$}
\label{app:TheWitness}
In this appendix, we present in detail the optimization problem for calculating the average guessing probability of Eve, and its solution. The average guessing probability is obtained by maximizing, over all possible tripartite states $\rho_{ABE}$ (recall that Eve holds a purification of $\rho_{ABE}$) and all possible measurements of Eve $\{E^e\}_e$, the probability $P_g$ that Eve's guesses Alice's outcomes, and then performing a weighted average of these probabilities:
\begin{align} \label{SDPgeneral}
\begin{split}
P_{g}= 	&\hspace{0.5cm} \max_{  \rho_{ABE},\{E^e\}_{e=0}^{d-1}}\sum_{e,y} \Tr \left( \rho_{ABE} A_1^e \otimes  B_1^y \otimes E^e\right) \\
\text{s.t.} 	&\hspace{0.5cm} \Tr \left( \widehat{W} \rho_{AB}\right)=W,\\
	&\hspace{0.5cm} \rho_{ABE} \geq 0, \\
    &\hspace{0.5cm} \Tr \left(\rho_{ABE}\right)=1, \\
	&\hspace{0.5cm} E^e \geq 0 \hspace{0.2cm}\forall e,\\
	&\hspace{0.5cm} \sum_e E^e = \mathbb{1},\\
\end{split}
\end{align}
where $A_1,
B_1$ stand for the computational basis
and $\widehat{W}$ is the yet-to-be-defined operator, with $W$ its measured value that constrains the optimization. $W$ will be constructed depending on which \textit{target} state the experiment is trying to produce. 
Here, we consider $W$ to be the average value of equal outcomes that Alice and Bob get in the second basis, $W=\sum_{x} P(xx|22),$
which is the average value of the operator $\widehat{W}=\sum_x A_2^x \otimes B_2^x$.
Our goal now is to express the guessing probability as a function of $W$ and $d$, by solving the following optimization problem which we obtain from the previous one by substituting $\rho_e:=\Tr_{E} \left(\rho_{ABE} E^e\right)$:
\begin{align*}
\begin{split}
P_{g}(W,d)= 	&\hspace{0.5cm} \max_{\{\rho_e\}_{e}}\sum_{e=0}^{d-1} \Tr \left( \rho_e A_1^e \otimes \mathbb{1}_{d} \right) \\
\text{s.t.} 	&\hspace{0.5cm} \Tr \left( \widehat{W} \sum_{e=0}^{d-1} \rho_e\right)=W,\\
	&\hspace{0.5cm} \rho_e \geq 0\hspace{0.2cm},\forall e\in\{0,\ldots,d-1\}, \\
    &\hspace{0.5cm} \Tr \left(\sum_{e=0}^{d-1} \rho_e\right)=1, \\
\end{split}
\end{align*}
whose dual is
\begin{align*}
    \begin{split}
     \min_{\gamma,S} 	\hspace{0.2cm}	&\gamma+SW,\\
      \text{s.t.} 	\hspace{0.2cm}&\gamma \mathbb{1}_{d^2} + S \widehat{W} \geq \ketbra{e}{e} \otimes \mathbb{1}_{d}  \hspace{0.2cm},\forall e.
    \end{split}
\end{align*}
The non-zero eigenvalues of $S \widehat{W}-\ketbra{e}{e} \otimes \mathbb{1}_{d}$, as a function of $S$ and the local dimension $d$, for a given $e$ are
$\lambda_{\pm}=\frac{S-1 \pm \sqrt{(S-1)^2+4S(d-1)/d}}{2}$ (each one with degeneracy $d$) ,
 and the optimization now reads:
\begin{align*}
    \begin{split}
     \min_{\gamma,S} 	\hspace{0.2cm}	&\gamma+SW,\\
      \text{s.t.} 	\hspace{0.2cm}&\gamma+ \lambda_- \geq 0 \hspace{0.3cm}\forall e,\\
      &\gamma+ \lambda_+ \geq 0 \hspace{0.3cm}\forall e.
    \end{split}
\end{align*}
Since $\lambda_+ \geq \lambda_-$, for all $S,d$, and $\lambda_\pm$ are the same for all $e$, we can relax the constraints to:
\begin{align*}
    \begin{split}
     \min_{\gamma,S} 	\hspace{0.2cm}	&\gamma+SW,\\
      \text{s.t.} 	\hspace{0.2cm}&\gamma+ \lambda_- \geq 0.
    \end{split}
\end{align*}
We finally solve
$\frac{\partial}{\partial S}\lambda_-=W,$ which gives
\begin{equation*}
    W=\frac{1}{2}\left(1 - \frac{2S+2-4/d}{ 2\sqrt{(S-1)^2+4S(d-1)/d}}\right),\ \ \ \
    S=\frac{2}{d}-1+\frac{1-2W}{d}\sqrt{\frac{d-1}{W(1-W)}}\ \ \ \text{and}\ \ \ \
    \lambda_-=-\frac{d-1}{d}-\frac{W}{d}\sqrt{\frac{d-1}{W(1-W)}}.
\end{equation*}
and obtain the form of the guessing probability as a function of $W$ and $d$:
\begin{equation}
    P_g(W,d)=-\lambda_-+SW=
    \frac{\left(\sqrt{W}+\sqrt{(d-1)(1-W)}\right)^2}{d}.\nonumber
    \label{keysol}
\end{equation}

\comment{
\section{Optimality of $W$ for the isotropic state}

In this section we show that $W=\sum_x P(xx|22)$ is optimal -- meaning that it constrains the problem as much as the full set of correlations -- when the target state is the isotropic state. In particular, we need to show that the optimization problem in which the full statistics are considered

\begin{align}  \label{SDPfullstats}
\begin{split}
P_{g}= 	&\hspace{0.5cm} \max_{\{\rho_e\}_{e=0}^{d-1}}\sum_e \Tr \left( \rho_e A_*^e \otimes \mathbb{1}_B \right) \\
\text{s.t.} 	&\hspace{0.5cm} \Tr \left(\sum_{e} \rho_e A_a^x \otimes B_b^y \right) = P(xy|A_aB_b) \hspace{0.3cm} \forall x,y,a,b\\
	&\hspace{0.5cm} \rho_e \geq 0\hspace{0.2cm}, \forall e, \\
    &\hspace{0.5cm} \Tr \left(\sum_e \rho_e\right)=1, \\
\end{split}
\end{align}

is equivalent to the optimization problem, in which only the correlations in the second basis are considered
\begin{align}  \label{SDP2}
\begin{split}
P_{g}= 	&\hspace{0.5cm} \max_{\{\rho_e\}_{e=0}^{d-1}}\sum_e \Tr \left( \rho_e A_*^e \otimes \mathbb{1}_B \right) \\
\text{s.t.} 	&\hspace{0.5cm} \Tr \left(\sum_{e,x} \rho_e A_2^x \otimes B_2^x \right) =\sum_x P(xx|22),\\
	&\hspace{0.5cm} \rho_e \geq 0\hspace{0.2cm}\forall e, \\
    &\hspace{0.5cm} \Tr \left(\sum_e \rho_e\right)=1,\\
\end{split}
\end{align}
when the target state is the isotropic state
\begin{equation*}
    P(xy|A_aB_b)= \Tr \rho^+_d(v) A_a^x \otimes B_b^y =\frac{1}{d^2}+\frac{v}{d} \delta_{ab}\left(\delta_{xy}-\frac{1}{d}\right).
\end{equation*}

\noindent We start by writing the dual of (\ref{SDPfullstats}) as
\begin{align} \label{SDPdual}
\begin{split}
P_{g}= 	&\hspace{0.5cm} \min_{\{T_{xyab}\}_{xyab}}\sum_{xyab} T_{xyab} \left(\frac{1}{d^2}+\frac{v}{d} \delta_{ab}\left(\delta_{xy}-\frac{1}{d}\right) \right)\\
\text{s.t.}&\hspace{0.5cm} \sum_{xyab}T_{xyab}A_a^x \otimes B_b^y  \geq \ketbra{e}{e}\otimes \mathbb{1}_{d^2} \hspace{0.3cm} \forall e.
\end{split}
\end{align}
Since the problem does not depend on the specific outcomes $x,y$, but rather on $\delta_{xy}$, we can write $T_{0ab}  = T_{xxab}, \ \forall x$ and $T_{1ab}  = T_{xyab},\ \forall x\neq y$, and get
\begin{eqnarray}
& T_{xyab}=\delta_{xy} T_{0ab} + (1-\delta_{xy}) T_{1ab},\nonumber\\
 & \sum_{xyab}T_{xyab}A_a^x \otimes B_b^y=\sum_{ab} T_{1ab} \mathbb{1}_{d^2} + \sum_{ab }\left(T_{0ab}-T_{1ab}\right) \sum_x A_a^x \otimes B_b^x,\nonumber\\&
 \sum_{xyab} T_{xyab} \left(\frac{1}{d^2}+\frac{v}{d} \delta_{ab}\left(\delta_{xy}-\frac{1}{d}\right) \right)=\sum_{ab}T_{1ab}+ \sum_{ab}\frac{T_{0ab}-T_{1ab}}{d}\left(   1+v(d-1)\delta_{ab}\right).\nonumber
 \end{eqnarray}
To make the notation lighter, we write $S:=\sum_{ab} T_{1ab},\ F_{ab}:=T_{0ab}-T_{1ab}$ and
$M_{ab}:=\sum_x  A_a^x \otimes B_b^x$, yielding $T_{xyab}=\delta_{xy} F_{ab} + \frac{S}{4}$.
With all the above in place, (\ref{SDPdual}) reads now
\begin{align} \label{SDPdual2}
\begin{split}
P_{g}= 	&\hspace{0.5cm} \min_{\{F_{ab}\}_{ab},S}S+\sum_{ab} F_{ab} \frac{   1+v(d-1)\delta_{ab}}{d}\\
\text{s.t.}&\hspace{0.5cm} S \mathbb{1}_{d^2} +\sum_{ab} F_{ab} M_{ab} \geq \ketbra{e}{e}\otimes \mathbb{1}_{d^2} \hspace{0.3cm} \forall e.
\end{split}
\end{align}
To complete the proof, it is sufficient to show that for any visibility $v$, the above optimization problem (\ref{SDP2}) has a solution for $ F_{11}=F_{12}=F_{21}=0$. It turns out that solving  (\ref{SDP2}) we obtain $W$ to be of the form $S \mathbb{1}_{d^2}+F_{22} M_{22}$, independent of $v$, which means that for any visibility $v$ this $W$ is optimal. In general, one would expect $W$ to depend on $v$, but this is not case. All the parameters that depend on $v$ are only constant and multiplicative factors, therefore considering $\widehat{W}=M_{22}$, with its observed value being $W=\Tr \rho M_{22}=\sum_x P(xx|22)$ is equivalent to using the full statistics, thus $W$ is optimal for any isotropic state.
}
\section{Implementation with temporal degrees of freedom}\label{app:Temporal}
We start by considering a hyper-entangled state of the form
\begin{equation}
\ket{\Psi}= \ket{\phi^-}_{AB} \otimes\int dt f(t) \ket{t}_A \otimes \ket{t}_B.
\end{equation}
This is the state of two entangled photons (one for Alice and one for Bob), with two degrees of freedom: the time of arrival $t$ of the photon at the respective labs of Alice and Bob and their polarization.
The time of arrival, i.e., the time at which the detector clicks, is a continuous variable, which can be discretized by considering time bins of size $t_b$. Setting a time frame $F$ outside of which a photon is ``lost'' and taking $F$ to be a multiple of $t_b$ we have effectively a discrete system of dimension $d=\frac{F}{t_b}$.\\
The probability that at the frame $[0,F]$ exactly $n$ pairs of entangled photons are produced is given by the Poisson distribution:
\begin{equation*}
P_F(n)=\frac{(\lambda F)^n e^{-\lambda F}}{n!},
\end{equation*}
with $\lambda$ being the production rate of the photon pairs. Both $F$ and $\lambda$ are tunable parameters, and we assume that $\lambda$ is small enough, such that multi-photon events in the same frame are negligible. In particular, we choose $\lambda $ such that
\begin{equation*}
    P_F(n\geq 2)=1-P_F(0)-P_F(1)=1-(1+\lambda F)e^{-\lambda F}<\epsilon.
\end{equation*}
The average number of photons per frame is $\lambda F$ and for $\lambda F<0.2$ we get $P_F(n\geq2)<0.015$, which is small enough with respect to the noise scale in our model. Note, though, that by decreasing the production rate $\lambda$ we are also decreasing the key rate, see Equation (\ref{eq:rate_tb}) at the end of this section, therefore we should tune these parameters carefully.\\

We consider two types of noise, namely the noise due to the interaction of the photons with the environment before entering the labs of Alice and Bob, and the noise introduced due to the detectors' inefficiency.
 Because of its interaction with the environment, a photon can be lost with probability $P_L$.
Given $n$ photons, the probability that $n_L$ of them are lost while the rest arrive at the lab is
$
   P_L^{n_L} (1-P_L)^{n-n_L} \genfrac(){0pt}{2}{n} {n_L}.
$

Moreover, photons from the environment may be introduced in the system. We assume that the environment produces on average $\nu$ photons per second. The number of photons arriving to the frame from the environment will then follow the Poisson distribution,
$    P_E(n)=(\nu F)^n e^{-\nu F}/(n!).$

As far as the noise due to the detectors is concerned, each detector has probability $P_C$ to click when a photon arrives, and probability per second $\mu$ to click when no photon is there. These events are called dark counts and they also follow the Poisson distribution,
$ P_D(n)=(\mu F)^n e^{-\mu F}/(n!).$

The probability that both Alice and Bob receive in their labs $(i,j)$ photons in a time frame $F$ is
\begin{equation*}
    P(i,j)=\sum_{n=0}^{\infty}\sum_{n_1=\max\{n-i,0\}}^n \sum_{n_2=\max\{n-j,0\}}^n  P_F(n) P_L^{n_1} (1-P_L)^{n-n_1} {n \choose n_1} P_L^{n_2} (1-P_L)^{n-n_2} {n \choose n_2} P_E(i-n+n_1)P_E(j-n+n_2).
\end{equation*}

Given that $i$ photons enter, the probability of obtaining exactly one click in a frame $F$ is
\begin{equation*}
    P({\text{click}}|i)=(1-P_C)^{i-1}\left( P_C P_D(0)i+(1-P_C) P_D(1)\right)
    =e^{-\mu F}(1-P_C)^{i} \left(\frac{iP_C}{1-P_C}+\mu F\right),
\end{equation*}
and the probability that both Alice and Bob get one click is
$P(11)=\sum_{i,j=0}^\infty P({\text{click}}|i) P({\text{click}}|j) P(i,j).$\\
After applying the approximation $P_F(n\geq 2)\approx 0$, we can calculate $P(11)$ to be:
\begin{equation*}
    P(11)\approx e^{-F[2(\mu +\nu P_C)+\lambda]}F\beta,\end{equation*}
  with  $ \beta:=\Big[\lambda  \alpha^2 +  F(\mu+\nu P_C )^2\Big]$ and  $\alpha:=\big[P_C(1-P_L)+F(\mu +P_C  \nu) P_L   + F(\mu+\nu P_C)(1-P_C)(1-P_L) \big]$.\\
In the above expressions note that, if only a photon pair is produced,
the probability that it passes and gets detected is $P_C(1-P_L)$, the probability that it passes but does not get detected is $(1-P_C)(1-P_L)$, and the probability that it gets lost is $P_L$, and an environment photon or a dark count is making the click instead with probability $(F(\mu+\nu P_C))$. The term $e^{-2(\mu+\nu P_c)F}$ is the probability that all the extra photons of the environment are not detected and there are no dark counts.
If there is no pair in the frame, the click must have come from the environment or it is a dark count.

If the setup is asymmetric (one detector is close to the source, the other is far), we can modify the formula to include different parameters for Alice and Bob:
\begin{align*}
     P(11)&\approx e^{-(\mu^A+\mu^B+\nu^A P_C^A+\nu^B P_C^B+\lambda)F} F\Big[\lambda \alpha^A \alpha^B +F(\mu^A+\nu^A P^A_C )(\mu^B+\nu^B P^B_C )\Big].
\end{align*}
In our noise model, we do not consider finite size effects (the number of rounds is sufficiently large), neither border effects on the frame ($F$ is sufficiently large, so the error of the clock that decides when the frame begins and ends is negligible), nor errors related to the relaxation time of the detectors (which is the time a detector needs before being able to detect another photon. If the frame were approximately the same size as the relaxation time this effect would be important, but we choose $F$ to be sufficiently large for this purpose). We also assume that the interaction with the environment can only destroy a photon, and that the photon pairs coming from the environment are uncorrelated. Furthermore, $F$ and the production rate $\lambda$ are chosen such that the probability of observing two or more entangled photons during a single frame is negligible. With these assumptions, we have a model that gives us the rate of ``valid'' rounds per second, as a function of  $F$, which, in turn, is proportional to the local dimension $d$:
\begin{equation}
    R(d)=P(11)/F
    =e^{-F[2(\mu+\nu P_C)+\lambda]} \beta.
\end{equation}
For large $F=d\, t_b$, we have  $\beta\approx F(\mu+\nu P_C)^2[(1+P_C P_L - P_C)^2+\lambda F]\propto F.$

We can also estimate the visibility, i.e. the probability that -- given that both parties had exactly one click -- the photons that clicked were the entangled ones coming from the laser source and not the environment or dark counts. First, we calculate the probability of a photon pair to survive and get detected
\begin{equation*}
P_S=P_F(1)(1-P_L)^2 P^2_C e^{-2F(\mu+\nu P_C)},
\end{equation*}
which gives the visibility as a function of the dimension $d$ to be
$v(d)=P_S/P(11)   =\lambda  (1-P_L)^2 P^2_C/\beta,$ while for an asymmetric setup we have $
     v(d)=\lambda  (1-P_L^A)(1-P_L^B) P^A_C P^B_C/\beta.$\\
Finally, for large $d$, $\lambda\propto1/F$ and $\beta\propto F$, thus making the visibility scale as $d^{-2}$.

\vspace{\baselineskip}
In order to take into account multi-photon events, we  write $\alpha$  and $P(11)$ as
\begin{align*}
    \alpha(n)&=(1-P_C+P_C P_L)^{n-1}\big[nP_C(1-P_L) + F(\mu+\nu P_C)(1-P_C+P_C P_L) \big],\\
     P(11)&=e^{-(\mu^A+\mu^B+\nu^A P_C^A+\nu^B P_C^B)F} \sum_{n=0}^\infty P_F(n)\alpha^A(n)\alpha^B(n),
\end{align*}
and obtain \begin{equation}
    P(11)=e^{-F(T_A+T_B+\gamma)}(F^2T_A T_b + F\gamma)\ \ \ \ \text{and}\ \ \ \
      R=e^{-F(T_A+T_B+\gamma)}(FT_A T_b + \gamma), \label{eq:rate_tb}
\end{equation}
where $S=1-P_C+P_C P_L,\ Q=\mu+\nu P_C,\ T_{A/B}=Q^{A/B}+\lambda S^{B/A}(1-S^{A/B})\text{ and }\gamma=\lambda (1-S^A)(1-S^B)$.

Accordingly, the generalized success probability becomes
\begin{equation*}
    P_S=e^{-F(Q^A+Q^B)}\sum_{n=0}^\infty P_F(n) (S^A S^B)^{n-1}(1-S^A)(1-S^B)n
    =e^{-F(T_A+T_B+\gamma)}\gamma F.
\end{equation*}
The maximum of $P_S$ is for $F=1/(T_A+T_B+\gamma)$, and the visibility becomes
$v=1/(1+FT_AT_B \gamma^{-1}).$

\section{Implementation with spatial degrees of freedom}\label{app:Spatial}
A basic parameter of our model is $\Delta t$, the coincidence window in which two events, for Alice and Bob, are considered coincident. Note that multiple clicks in the same coincidence window are treated as a single event.
Another parameter is related to the projection of an infinite-dimensional entangled state of the form $
\ket{\Psi}=\sum_{l=-\infty}^{\infty}c_l \ket{l}_A\otimes\ket{-l}_B$ into a finite dimensional space. This is the probability $P_P(d):=\Tr\big( \mathbb{1}_{d^2} \ketbra{\Psi}{\Psi} \mathbb{1}_{d^2} \big)$, which we assume to be constant, thus providing lower dimensions with an advantage. One could give an advantage to higher dimensions by dropping this assumption.

\begin{figure}[H]
     \centering
     \includegraphics[width= 0.75\textwidth]{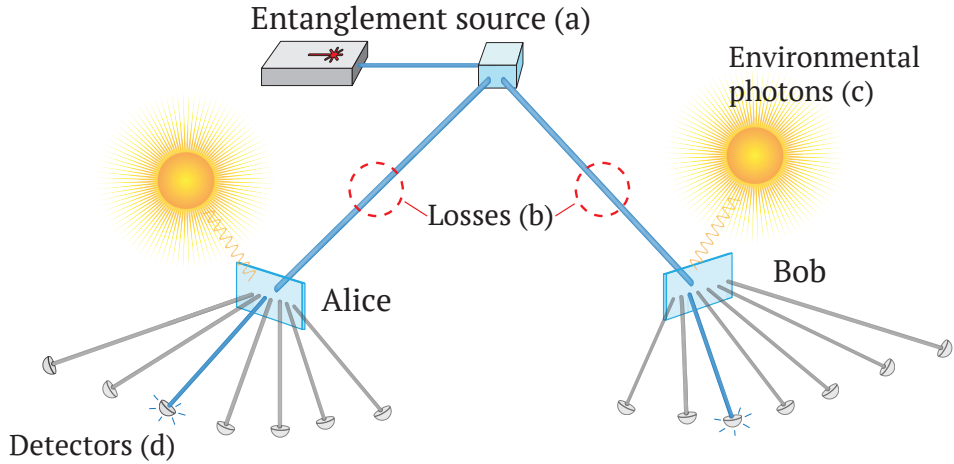}
     \caption{A schematic representation of the noise model: a laser source (a) produces entangled pairs distributed in time with a Poisson distribution, with $\lambda$ as the average number of photons per second. The pairs are distributed to the parties and suffer from party-dependent losses (b) with probability $P_L$. On top of the entangled photons the parties receive $\nu$ environmental photons per second on average, distributed as well with a Poissonian. For each of the modes that are being measured, there is an associated detector (d), which comes with an average number of dark counts $\mu$ per second and an efficiency $P_C$.}
     \label{fig:noise_model}
 \end{figure}

In our model, we consider that a click can come either from the laser or from the environment or from the dark counts. We start with the laser photons, by assuming that they follow a Poisson distribution, factorized by the probability $P_P(d)$ of being within the modes $-d/2$ and $d/2$. The probability that the laser produces $j$ detectable photons given that $n$ in total are emitted is
$ (\Delta t \lambda)^n \frac{e^{-\Delta t \lambda}}{n!} P_P^j(d) (1-P_P(d))^{n-j} \binom{n}{j}$. Given $j$ photons produced from the laser, the probability they produce no click in one of the labs is
\begin{equation*}
    P(0|j)=\sum_{r=0}^j P_L^{j-r} (1-P_L)^r \binom{j}{r}  (1-P_C)^{r}\binom{r}{0}
    =[1-P_C(1-P_L)]^j
    =(1-T)^j, \ \ \ \text{where}\ \ \ T=P_C(1-P_L),
    \end{equation*}
while the probability that they produce one or more clicks in a single detector is
\begin{align*}
    P(1|j)&=\sum_{r_1=1}^j P_L^{j-r_1} (1-P_L)^{r_1} \binom{j}{r_1}\sum_{r_2=1}^{r_1} \left(\tfrac{d-1}{d}\right)^{r_1-r_2} (1-P_C)^{r_1-r_2} \left(\tfrac{1}{d}\right)^{r_2} [1-(1-P_C)^{r_2}]\binom{d}{1}\binom{r_2}{r_1}\\&=d\bigg[\left(1-T+\tfrac{T}{d}\right)^j-(1-T)^j\bigg].
\end{align*}
In the above, $P_L$ reflects the losses affecting the entangled photons, and we further assumed that all modes suffer the same losses, therefore we absorbed them in $P_C$. In case one would like to further refine the noise model, they could consider different losses for different modes.
We also calculate the probability that, given $j$ photons were produced, Alice and Bob both get one click from a laser photon in different detectors, as this way we account for entangled photons. We have

\begin{align*}
   P(\neq|j)&=
   d(d-1)\sum_{\substack{r_1,r_2,r_3=0 \\ r_0+r_1+r_2+r_3=j}}^j\sum_{r_0=0}^{j-1}
   \frac{j!}{r_0! r_1! r_2! r_3!}
   \times
   (P_L^A P_L^B)^{r_0} [P_L^B (1-P_L^A)]^{r_1}  [P_L^A (1-P_L^B)]^{r_2} [(1-P_L^A)(1-P_L^B)]^{r_3} \times
   \\
   &
   \times
   \sum_{l_1=0}^{r_1} \left(\frac{d-1}{d}\right)^{r_1-l_1} (1-P_C^A)^{r_1-l_1} \left(\frac{1}{d}\right)^{l_1} \binom{r_1}{l_1}
   \times \sum_{l_2=0}^{r_2} \left(\frac{d-1}{d}\right)^{r_2-l_2} (1-P_C^B)^{r_2-l_2} \left(\frac{1}{d}\right)^{l_2} \binom{r_2}{l_2}
   \times
   \\
   &
   \times
   \sum_{\substack{s_3,p_3,q_3=0 \\ s_3+p_3+q_3=r_3}}^{r_3}\left(\frac{1}{d}\right)^{s_3+p_3} \left(\frac{d-2}{d}\right)^{q3} \frac{r_3!}{s_3!p_3!q_3!} (1-P_C^B)^{s_3+q_3}(1-P_C^A)^{p_3+q_3}
   \times[1-(1-P_C^A)^{l_1+s_3}][1-(1-P_C^B)^{l_2+p3}]
   = \\
   &=
   d(d-1)\Bigg\{[(1-T^A)(1-T^B)]^j+\bigg[(1-T^A)(1-T^B)+\frac{1}{d}[T^A(1-T^B)+ T^B(1-T^A)]\bigg]^j
   -\\
   &-\bigg[(1-T^B)(1-T^A+T^A/d)\bigg]^j - \bigg[(1-T^A)(1-T^B + T^B/d)\bigg]^j\Bigg\}.
\end{align*}
Similarly, the probability that, given $j$ photons, Alice and Bob both get one click from a laser photon in the same detector is
\begin{align*}
   P(=|j)
   &=
   d\sum_{\substack{r_1,r_2,r_3=0 \\ r_0+r_1+r_2+r_3=j}}^j\sum_{r_0=0}^{j-1}
   \frac{j!}{r_0! r_1! r_2! r_3!}
   \times
   (P_L^A P_L^B)^{r_0} [P_L^B (1-P_L^A)]^{r_1}  [P_L^A (1-P_L^B)]^{r_2} [(1-P_L^A)(1-P_L^B)]^{r_3} \times
   \\
   &
   \times
   \sum_{l_1=0}^{r_1} \left(\frac{d-1}{d}\right)^{r_1-l_1} (1-P_C^A)^{r_1-l_1} \left(\frac{1}{d}\right)^{l1} \binom{r_1}{l_1}
   \times
   \sum_{l_2=0}^{r_2} \left(\frac{d-1}{d}\right)^{r_2-l_2} (1-P_C^B)^{r_2-l_2} \left(\frac{1}{d}\right)^{l_2} \binom{r_2}{l_2}
   \times   \\
   &
   \times
   \sum_{l_3=0 }^{r_3} \left(\frac{d-1}{d}\right)^{r_3-l_3}  \left(\frac{1}{d}\right)^{l_3}  (1-P_C^B)^{r_3-l_3}(1-P_C^A)^{r_3-l_3} \binom{r_3}{l_3}
   \times
   [1-(1-P_C^A)^{l_1+l_3}][1-(1-P_C^B)^{l_2+l3}]=
   \\
   &=d\Bigg\{[(1-T^A)(1-T^B)]^j+\bigg[(1-T^A)(1-T^B)+\frac{1}{d}[T^A(1-T^B)+T^B(1-T^A)+T^A T^B] \bigg]^j
   -
   \\
   &
   -\bigg[(1-T^A+T^A/d)(1-T^B)\bigg]^j - \bigg[(1-T^A)(1-T^B+T^B/d)\bigg]^j\Bigg\},
\end{align*}

We can now proceed to the clicks due to dark counts.
Again, their distribution is Poissonian with multiple clicks in the same detector counting as one. Therefore, the probability of no clicks in a single detector is $e^{-\Delta t \mu}$, while the probability of one or more clicks in one detector is $1-e^{-\Delta t \mu}$.
In total, the probability of $n$ dark counts in all $d$ detectors is
$ P_D(n,d)= (e^{-\Delta t \mu})^{d-n}(1-e^{-\Delta t \mu})^{n} \binom{d}{n},$
which also gives another quantity that we need: the probability that, given that a detector already clicked because of a laser photon, all other detectors do not click because of dark counts. Denoting this probability by $P(0,d-1)$, we have \begin{equation*}
    P(0,d-1)=P_D(1,d)\frac{1}{d}+P_D(0,d)=e^{-(d-1)\Delta t \mu}.
\end{equation*}

Finally, we consider the last type of clicks that Alice and Bob register, the ones coming from environmental photons. We assume that they are produced according to a Poisson distribution.
Given $r$ photons in the same mode, the probability that at least one of them clicks is $1-(1-P_C)^r$. Furthermore, the probability that $r$ out of $q$ photons go in the same mode, one of them clicks, while all the others do not click is:
\begin{equation*}
 \sum_{r=1}^q \left(\frac{d-1}{d}\right)^{q-r} (1-P_C)^{q-r} \left(\frac{1}{d}\right)^{r}\big[1-(1-P_C)^r\big] {q \choose r} =\Bigg[1-\frac{P_C(d-1)}{d}\Bigg]^q -(1-P_C)^q,
 \end{equation*}
and we  multiply it with the Poissonian distribution of environmental photons and the number of modes to obtain
\begin{eqnarray*}
  P_E(1,d)=  d\sum_{q=0}^{\infty}(\nu \Delta t)^q\Bigg[1-\frac{P_C(d-1)}{d}\Bigg]^q \frac{e^{-\nu \Delta t}}{q!}-d\sum_{q=0}^{\infty}(\nu \Delta t)^q(1-P_C)^q \frac{e^{-\nu \Delta t}}{q!}
    =d P_E(0^*,d)\left(1-e^{-P_C \nu \Delta t/d}\right),
\end{eqnarray*}
with $P_E(0^*,d)=e^{-P_C \nu \Delta t(d-1)/d}$,
which is the probability that, in case a detector already clicked because of a laser photon or a dark count,
 $r$ out of $q$ environmental photons end up in this detector, while the rest $q-r$ end up in the other detectors and none of them clicks. Note that losses affecting environmental photons are absorbed in $\nu$.

\vspace{\baselineskip}
\noindent With all the above in place, we can now calculate the quantities of interest, namely the visibility and the key rate.
We start with the probability that, given $j$ photons locally, a single detector clicks
\begin{align*}
    P(1)&=P(1|j)P_D(0,d-1)P_E(0^*,d)+P(0|j)P_D(1,d)P_E(0^*,d)+P(0|j)P_D(0,d)P_E(1,d)
    \\
    &
    =d  P_D(0,d-1)P_E(0^*,d) \bigg[\left(1-T+\frac{T}{d}\right)^j-(1-T)^j e^{-\Delta t (\mu+P_C \nu/d)}\bigg],
\end{align*}
and we continue with the probability that both parties get a single click

\begin{align*}
    P(11)&=\sum_{n=0}^{\infty}\sum_{j=0}^n (\Delta t \lambda)^n \frac{e^{-\Delta t \lambda}}{n!} P_P^j(d) [1-P_P(d)]^{n-j} \binom{n}{j} \\
    &\times\big\{
    [P(1|j)P_D(0,d-1)P_E(0^*,d)+P(0|n)P_D(1,d)P_E(0^*,d)+P(0|n)P_D(0,d)P_E(1,d)]^A
    \times\\
   & \times
   [P(1|j)P_D(0,d-1)P_E(0^*,d)+P(0|n)P_D(1,d)P_E(0^*,d)+P(0|n)P_D(0,d)P_E(1,d)]^B+\\
   &+[P_D(0,d-1)P_E(0^*,d)]^A[P_D(0,d-1)P_E(0^*,d)]^B [P(\neq|j)+P(=|j)-P^A(1|j)P^B(1|j)]\big\}=
   \\
   \hspace{0.1cm}
     &=
     d e^{-\Delta t (d-1)(\mu^A+\xi^A/d+\mu^B+\xi^B/d)} e^{-\Delta t \gamma} \Bigg\{d\left(1-e^{-\Delta t (\mu^A+\xi^A/d)}\right) \left(1-e^{-\Delta t (\mu^B+\xi^B/d)}\right) + e^{\Delta t \gamma/d} -1\Bigg\},
    \end{align*}
where
\begin{equation*}
    \xi^{A/B}=P_C^{A/B} \nu^{A/B}  + P_P(d) \lambda P_C^{B/A}(1-P_L^{B/A})(1-P_C^{A/B}+P_C^{A/B} P_L^{A/B})\ \ \ \text{and}\ \ \
\end{equation*}
are all experimental constants independent of $d$. Note that $\gamma$ is the same as in the previous implementation of temporal encoding, and represents the average number of detectable entangled photons, while $\xi$  represents the environmental and laser photons that click independently in the labs. We are now able to write that the rate of ``valid'' rounds per second is
\begin{equation*}
   R(d)=\frac{P(11)}{\Delta t}= C  e^{-d(\mu^A+\mu^B)}e^{(\xi^A+\xi^B)/d}  \Bigg\{d^2\Big[1-e^{- (\mu^A+\xi^A/d)}\Big] \Big[1-e^{-(\mu^B+\xi^B/d)}\Big] +d\left( e^{ \gamma/d} -1\right)\Bigg\},
\end{equation*}
where $C=e^{\Delta t (\mu^A+\mu^B-\xi^B-\xi^B -\gamma)}/\Delta t$.\\

Finally, in order to get the expression for the visibility, we need the probability that an entangled pair clicks on both labs, while all other detectors do not click. However, once the detectors click because of the entangled pair, they might also receive any number of other photons and register dark counts. We can go around this cumbersome calculations, by directly considering the probability that the same detector clicks for both Alice and Bob (which is due to the correlated photons and the noise), and subtract the probability that different detectors click (which is due to the noise only). Essentially, we  subtract the $P(\neq|j)$ contribution in $P(11)$ from its $P(=|j)$ contribution to obtain
\begin{equation*}
    P_S=d e^{-\Delta t (d-1)(\mu^A+\xi^A/d+\mu^B+\xi^B/d)} e^{-\Delta t \gamma} \left(
 e^{\Delta t \gamma/d} -1\right),
\end{equation*}
which, in turn, gives us the visibility
\begin{equation*}
    v(d)=\frac{P_S}{P(11)}=\frac{1}{1+d\left(1-e^{-\Delta t (\mu^A+\xi^A/d)}\right) \left(1-e^{-\Delta t (\mu^B+\xi^B/d)}\right) (e^{\Delta t \gamma/d} -1)^{-1}}.
\end{equation*}
By re-scaling with $\Delta t$, we can also write
\begin{equation*}
v(d)=\frac{e^{ \gamma/d} -1}{e^{ \gamma/d} -1+d\big[1-e^{- (\mu^A+\xi^A/d)}\big] \big[1-e^{-(\mu^B+\xi^B/d)}\big] }. \end{equation*}
Note that for large $d$ and small $\Delta t$ the visibility scales as
$v(d) \approx \frac{1}{1+d^2 \Delta t \mu^A \mu^B \gamma^{-1}}.$

\end{document}